\documentclass[5pt]{article}
\usepackage{amsmath}
\usepackage[latin1]{inputenc}
\usepackage{amsmath,amsthm,amssymb}
\usepackage{graphicx}
\usepackage[all,poly,knot]{xy}

\usepackage{color}

\textwidth162mm \textheight22cm \hoffset-16mm \voffset-20mm
\pagestyle{myheadings}\markright{}
\renewcommand{\theequation}{\thesection.\arabic{equation}}

\newtheorem{lem}{Lemma}[section]
\newtheorem{thm}{Theorem} [section]
\newtheorem{prop}{Proposition} [section]

\newtheorem{coro}{Corollary}[section]

\title{Compositional Inverses of AGW-PPs
\thanks{Supported By NSF of China No. 12171163 }}

\author{Pingzhi Yuan\thanks{ P. Yuan is with School of  of Mathematical Science, South China Normal University,  Guangzhou 510631, China (email: yuanpz@scnu.edu.cn).}}

    \date{}
\begin{document}
\baselineskip15pt \maketitle
\renewcommand{\theequation}{\arabic{section}.\arabic{equation}}
\catcode`@=11 \@addtoreset{equation}{section} \catcode`@=12

    \begin{abstract}In this paper, we present two methods to obtain the compositional inverses of AGW-PPs. We improve some known results  in this topic. 

\end{abstract}

{\bf Keywords:}
 Finite fields, permutation polynomials, AGW criterion,  compositional inverses, branch functions.

\section{Introduction}

\,\,\, Let $q$ be a prime power, $\mathbb{F}_q$ be the finite field of order $q$, and $\mathbb{F}_q[x]$
be the ring of polynomials in a single indeterminate $x$ over $\mathbb{F}_q$. A polynomial
$f \in\mathbb{F}_q[x]$ is called a {\em permutation polynomial} (PP for short) of $\mathbb{F}_q$ if it induces
a one-to-one map from $\mathbb{F}_q$ to itself.

Permutation polynomials over finite fields have been an interesting
subject of study for many years, and have applications in coding
theory \cite{LC07}, cryptography \cite{RSA, SH}, combinatorial
design theory \cite{DY06}, and other
areas of mathematics and engineering. Information about properties,
constructions, and applications of permutation polynomials may be
found in Lidl and Niederreiter \cite{LN97,LN86}, and Mullen \cite{Mull}.

In 2011, Akbrary, Ghioca and Wang  \cite{AGW11} proposed a powerful method called the AGW criterion for constructing PPs. A PP is called AGW-PP when a PP is constructed using the AGW criterion or it can be interpreted by the  AGW criterion. AGW-PPs can be divided into three types: multiplicative type, additive type and hybrid type.
 It is difficult to obtain the explicit compositional inverse of a random PP, except for several well-known classes.   Compositional
inverses of different classes of PPs of special forms have been
obtained in explicit or implicit forms; see \cite{CH02, LQW19, LN97, NLQW21, TW14, TW17, W17, Wu14, WL13, WL13J, ZYLHZ19, ZY18, ZWW20} for more details.  Recently, Niu, Li, Qu and Wang \cite{NLQW21} obtained a general method to finding compositional inverses of AGW-PPs. They obtained the compositional inverses of all AGW-PPs of the type $x^rh(x^s)$ over $\mathbb{F}_q$, where $s|q-1$.  However there are many other classes of AGW-PPs whose compositional inverses are unknown, and it is not easy to follow the framework in \cite{NLQW21} to obtain the compositional inverses of AGW-PPs for other types. The purpose of the present paper is to find other general methods to obtain the compositional inverses of  AGW-PPs.

The rest of this paper is organized as follows. In Section 2, we prove some results related to the AGW criterion. In particular, we obtain a useful commutative diagram, which is essential for the proofs of our main theorems. In Section 3, we use the dual diagram obtained in Section 2 to finding the compositional inverses of AGW-PPs. We improve some results in \cite{NLQW21}. In Section 4, we describe another method to compositional inverses of PPs by using the  branch functions.

\section{ AGW criterion and the dual diagram}

In this section, we present some  results related to the AGW criterion, and we will give the dual diagram when  the AGW criterion is applied to  a bijective function $f$.

The following lemma is taken from \cite[Lemma 1.1]{AGW11}, which is called AGW criterion now.

\begin{lem}\label{lem-1.1}   {\rm ( AGW criterion) }
Let $A, S$ and $\bar{S}$ be finite sets with $\sharp
S=\sharp\bar{S}$, and let $f:A\rightarrow A, h:
S\rightarrow\bar{S}$, $\lambda: A\rightarrow S$, and $\bar{\lambda}:
A\rightarrow \bar{S}$ be maps such that $\bar{\lambda}\circ
f=h\circ\lambda$. If both $\lambda$ and $\bar{\lambda}$ are
surjective, then the following statements are equivalent:

(i) $f$ is a bijective (a permutation of $A$); and

(ii) $h$ is a bijective from $S$ to $\bar{S}$ and if $f$ is
injective on $\lambda^{-1}(s)$ for each $s\in S$.
\end{lem}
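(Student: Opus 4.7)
The plan is to prove the two implications (i) $\Rightarrow$ (ii) and (ii) $\Rightarrow$ (i) separately by diagram chasing on the square $\bar{\lambda}\circ f = h\circ\lambda$, exploiting the surjectivity of $\lambda$ and $\bar{\lambda}$ together with the equal-and-finite cardinalities of $S$ and $\bar{S}$, and finally the finiteness of $A$ to upgrade injectivity to bijectivity.

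For (i) $\Rightarrow$ (ii), I would first verify that $h$ is bijective. Given any $\bar{s}\in\bar{S}$, surjectivity of $\bar{\lambda}$ produces some $a\in A$ with $\bar{\lambda}(a)=\bar{s}$, and then the assumed bijectivity of $f$ gives $a'\in A$ with $f(a')=a$. Applying $\bar{\lambda}\circ f=h\circ\lambda$ to $a'$ yields $h(\lambda(a'))=\bar{\lambda}(a)=\bar{s}$, so $h$ is surjective from $S$ onto $\bar{S}$; since $|S|=|\bar{S}|$ is finite, $h$ is automatically injective as well, hence bijective. The second clause of (ii) is immediate: if $f$ is injective on all of $A$, it is injective on every subset, including each fiber $\lambda^{-1}(s)$.

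For (ii) $\Rightarrow$ (i), since $A$ is finite, it suffices to establish injectivity of $f$. Suppose $f(a_1)=f(a_2)$. Applying $\bar{\lambda}$ and using the commuting square gives $h(\lambda(a_1))=\bar{\lambda}(f(a_1))=\bar{\lambda}(f(a_2))=h(\lambda(a_2))$; since $h$ is injective, $\lambda(a_1)=\lambda(a_2)=:s$, so both $a_1,a_2$ lie in the same fiber $\lambda^{-1}(s)$. By hypothesis $f$ is injective on that fiber, whence $a_1=a_2$. Thus $f$ is injective on the finite set $A$, hence a permutation.

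The argument is essentially a formal diagram chase, so there is no deep obstacle; the only step requiring care is the cardinality argument, where I must explicitly invoke the hypothesis $|S|=|\bar{S}|<\infty$ to convert surjectivity of $h$ into bijectivity in the first direction, and the finiteness of $A$ to convert injectivity of $f$ into bijectivity in the second. I would also be mindful that the reverse direction really does need both clauses of (ii): without injectivity on fibers the diagram chase stalls at $\lambda(a_1)=\lambda(a_2)$, and without injectivity of $h$ it does not even reach that point.
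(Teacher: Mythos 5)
Your proof is correct. Note, however, that the paper does not actually prove this lemma: it is imported verbatim from \cite{AGW11} (Lemma 1.1 there), so there is no in-paper argument to compare against line by line. Your two-directional diagram chase is the standard proof and is sound: the cardinality step turning surjectivity of $h$ into bijectivity correctly uses $\sharp S=\sharp\bar{S}<\infty$, and the reverse direction correctly isolates where each clause of (ii) is needed. The closest thing the paper offers is Lemma \ref{le2.2}, which packages the reverse direction differently: instead of proving injectivity of $f$ directly, it shows that injectivity on fibers plus disjointness of the images $f(\lambda^{-1}(s))$ forces $\sharp f(A)=\sum_{s\in S}\sharp\lambda^{-1}(s)=\sharp A$, hence surjectivity. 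Your condition ``$f(a_1)=f(a_2)$ and $h$ injective imply $\lambda(a_1)=\lambda(a_2)$'' is exactly the contrapositive of condition (ii) of that lemma, so the two routes are logically equivalent; yours is marginally more direct, while the paper's counting formulation is what it later reuses (e.g.\ in the proof of Lemma \ref{le2.3}) to decompose $A$ as a disjoint union of fiber images.
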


We also have other  results for the bijection of a map. We have
\begin{lem}\label{le2.2} Let $A, S$ be finite sets,  $f: A\to A$ a map and $\lambda: A\to S$ a surjective map. Then $f$ is a bijection if and only if

(i)  $f(x)$ is
injective on each $\lambda^{-1}(s)$ for all $s \in S$.

(ii) If  $\lambda(a)\ne\lambda(b)$, then $f(a)\ne f(b)$.

Moreover (ii) is equivalent to $f\left(\lambda^{-1}(s_1)\right)\cap f\left(\lambda^{-1}(s_2)\right)=\emptyset$ for any distinct $s_1, s_2\in S$.
\end{lem}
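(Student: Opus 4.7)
The plan is to reduce everything to the elementary fact that, since $A$ is a finite set, $f\colon A\to A$ is a bijection precisely when it is injective, and then to partition the injectivity condition according to the fibers of $\lambda$.

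For the forward direction I would just observe that if $f$ is a bijection, then $f$ is injective on the whole of $A$, hence injective on each fiber $\lambda^{-1}(s)$, which yields (i); and for (ii), $\lambda(a)\neq\lambda(b)$ forces $a\neq b$, so $f(a)\neq f(b)$ by injectivity.

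For the converse I would assume (i) and (ii) and prove injectivity of $f$ directly. Take $a,b\in A$ with $f(a)=f(b)$. Condition (ii) (contrapositive) forces $\lambda(a)=\lambda(b)=:s$, so $a,b\in\lambda^{-1}(s)$, and then (i) gives $a=b$. Since $A$ is finite, injectivity upgrades to bijectivity. The only tiny subtlety worth mentioning is this use of finiteness, but it is immediate; I would not expect any real obstacle here.

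For the ``moreover'' claim I would prove the two implications separately. First, if (ii) holds and $s_1\neq s_2$ are distinct elements of $S$, then for every $a\in\lambda^{-1}(s_1)$ and $b\in\lambda^{-1}(s_2)$ one has $\lambda(a)=s_1\neq s_2=\lambda(b)$, hence $f(a)\neq f(b)$ by (ii); this says precisely that $f(\lambda^{-1}(s_1))\cap f(\lambda^{-1}(s_2))=\emptyset$. Conversely, if the image sets are pairwise disjoint and $\lambda(a)\neq\lambda(b)$, set $s_1=\lambda(a)$ and $s_2=\lambda(b)$; then $f(a)\in f(\lambda^{-1}(s_1))$ and $f(b)\in f(\lambda^{-1}(s_2))$ lie in disjoint sets, so $f(a)\neq f(b)$, giving (ii).

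Overall the lemma is essentially a reformulation/partition version of Lemma~\ref{lem-1.1} without the auxiliary map $\bar\lambda$, and I would not expect any step to be difficult; the proof is almost purely bookkeeping about fibers and images.
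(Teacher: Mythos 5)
Your proof is correct, but your argument for the converse direction differs from the paper's. The paper deduces bijectivity by counting: injectivity on each fiber gives $\sharp f\left(\lambda^{-1}(s)\right)=\sharp\lambda^{-1}(s)$, disjointness of the images gives $\sharp f(A)=\sum_{s\in S}\sharp f\left(\lambda^{-1}(s)\right)=\sharp A$, hence $f$ is surjective and therefore bijective. You instead prove injectivity directly (if $f(a)=f(b)$, the contrapositive of (ii) forces $a,b$ into the same fiber, and (i) finishes), then invoke finiteness. Both routes are elementary and both rely on $A$ being finite, just through the two opposite halves of the pigeonhole principle (surjective self-map is bijective versus injective self-map is bijective); the paper's counting version has the small advantage of setting up the fiber-decomposition identity $A=\uplus_{s}f\left(\lambda^{-1}(s)\right)$ that is reused in the proof of Lemma 2.3, whereas yours is slightly shorter. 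You also explicitly verify the ``moreover'' equivalence between (ii) and the pairwise disjointness of the image sets, which the paper uses interchangeably without proof; that is a genuine (if minor) improvement in completeness.
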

\begin{proof} If $f$ is a bijective, then (i) holds trivially. For any distinct $s_1, s_2\in S$, since $\lambda^{-1}(s_1)\cap\lambda^{-1}(s_2)=\emptyset$, we have $f\left(\lambda^{-1}(s_1)\right)\cap f\left(\lambda^{-1}(s_2)\right)=\emptyset$, and (ii) holds.

Conversely, suppose that (i) and (ii) hold, since $f(x)$ is
injective on each $\lambda^{-1}(s)$ for all $s \in S$ and $f\left(\lambda^{-1}(s_1)\right)\cap f\left(\lambda^{-1}(s_2)\right)=\emptyset$  for any distinct $s_1, s_2\in S$, we have
$$\sharp f(A)=\sum_{s\in S} \sharp f\left(\lambda^{-1}(s)\right)=\sum_{s\in S}\sharp\lambda^{-1}(s)=\sharp A,$$
which implies that $f$ is a surjective, and thus $f$ is a bijection.
\end{proof}

\begin{lem}\label{le2.3} Let $A, S, \bar{S}$ be finite sets with $\sharp S=\sharp\bar{S}$,  $f: A\to A$ a map and $\lambda: A\to S$ a surjective map. Then $f: A\to A$ is a bijection if and only if the following two conditions hold:

(i)  $f(x)$ is
injective on each $\lambda^{-1}(s)$ for all $s \in S$.

(ii) There  exists a maps pair $(\bar{\lambda}, h)$ such that  $\bar{\lambda}: A \to \bar{S}$ is a surjective,  $h: S\to \bar{S}$ is a bijective and the following diagram commutes.
$$\xymatrix{
  A \ar[d]_{\lambda} \ar[r]^{f}
                & A \ar[d]^{\bar{\lambda}}  \\
  S  \ar[r]_{h}
                & \bar{S}            }
$$
\end{lem}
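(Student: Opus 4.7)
\medskip
\noindent\textbf{Proof proposal for Lemma \ref{le2.3}.}
My plan is to reduce both directions to tools already established. For the $(\Rightarrow)$ direction, the task is really to \emph{construct} a suitable pair $(\bar{\lambda},h)$; for the $(\Leftarrow)$ direction, the given data is essentially the hypothesis of Lemma \ref{le2.2}, and I will invoke it.

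For the forward implication, assume $f:A\to A$ is a bijection. Condition (i) is immediate, since $f$ injective on $A$ forces $f$ injective on every subset, in particular on each fiber $\lambda^{-1}(s)$. For (ii), the slick choice is to take $\bar{S}:=S$, $h:=\mathrm{id}_S$, and $\bar{\lambda}:=\lambda\circ f^{-1}$. Then $|\bar{S}|=|S|$ is automatic, $h$ is trivially bijective, and $\bar{\lambda}$ is surjective as a composition of a bijection $f^{-1}$ with the surjection $\lambda$. Commutativity of the square reduces to
\[
\bar{\lambda}\circ f = \lambda\circ f^{-1}\circ f = \lambda = \mathrm{id}_S\circ\lambda = h\circ\lambda,
\]
so the diagram commutes and (ii) holds.

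For the backward implication, assume (i) and (ii); I will verify the two conditions of Lemma \ref{le2.2}. Condition (i) there is exactly condition (i) here. For condition (ii) of Lemma \ref{le2.2}, suppose $a,b\in A$ with $\lambda(a)\neq\lambda(b)$. Since $h$ is a bijection from $S$ to $\bar{S}$, we get $h(\lambda(a))\neq h(\lambda(b))$. Commutativity of the square gives $\bar{\lambda}(f(a))=h(\lambda(a))\neq h(\lambda(b))=\bar{\lambda}(f(b))$, and since $\bar{\lambda}$ is a well-defined map, this forces $f(a)\neq f(b)$. Lemma \ref{le2.2} then yields that $f$ is a bijection.

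There is essentially no hard step here; the only subtle point is picking the right $(\bar{\lambda},h)$ in the forward direction. One might be tempted to keep $\bar{\lambda}=\lambda$ and define $h$ by pushing along $f$, but this would require $\lambda$ to be injective, which it is not in general. The trick is to absorb the non-injectivity of $\lambda$ into $\bar{\lambda}$ by precomposing with $f^{-1}$, keeping $h$ trivial. Everything else is bookkeeping with the commutative square and a direct appeal to Lemma \ref{le2.2}.
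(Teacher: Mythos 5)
Your proof is correct in substance and close in spirit to the paper's, but there is one small slip you should repair and one genuine difference of route worth noting. The slip: in the forward direction you ``take $\bar{S}:=S$,'' but $\bar{S}$ is fixed in the hypothesis of the lemma and condition (ii) asks for maps into \emph{that} $\bar{S}$, so you are not free to replace it by $S$. The repair is one line: since $\sharp S=\sharp\bar{S}$, pick any bijection $\beta:S\to\bar{S}$ and set $h:=\beta$, $\bar{\lambda}:=\beta\circ\lambda\circ f^{-1}$; commutativity then reads $\bar{\lambda}\circ f=\beta\circ\lambda=h\circ\lambda$. This is in fact exactly the paper's construction: the paper takes an \emph{arbitrary} bijection $h:S\to\bar{S}$ and defines $\bar{\lambda}$ to be constantly $h(s)$ on each block $f\left(\lambda^{-1}(s)\right)$, which is the same map $h\circ\lambda\circ f^{-1}$ described fiberwise; keeping $h$ arbitrary is what lets the paper observe in its Remark that there are precisely $(\sharp S)!$ valid pairs $(\bar{\lambda},h)$, a count your identity-only choice does not exhibit. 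For the backward direction you diverge from the paper: the paper simply cites the AGW criterion (Lemma \ref{lem-1.1}), while you verify condition (ii) of Lemma \ref{le2.2} directly ($\lambda(a)\neq\lambda(b)$ implies $h(\lambda(a))\neq h(\lambda(b))$, hence $\bar{\lambda}(f(a))\neq\bar{\lambda}(f(b))$, hence $f(a)\neq f(b)$). Your version is slightly more self-contained and makes explicit why injectivity of $h$ suffices; the paper's is shorter. Both are valid.
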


\begin{proof} (i) is obvious.
If $f: A\to A$ is a bijection, then, by the proof of Lemma \ref{le2.2} , $A$ is a disjointed union of $\lambda^{-1}(s_i), s_i\in S$, and $A$ is also a disjointed union of $f(\lambda^{-1}(s_i)), s_i\in S$, i.e.
$$A=\uplus_{s\in S}f\left(\lambda^{-1}(s)\right)=\uplus_{s\in S}\lambda^{-1}(s).$$
Now for any bijection $h: S\to \bar{S}, s\mapsto h(s)$, we define a map $\bar{\lambda}: A \to \bar{S}$ by $\bar{\lambda}(a_s)=h(s)$ for any $s\in f\left(\lambda^{-1}(s)\right)$. It is easy to check that $\bar{\lambda}: A \to \bar{S}$ is a surjective and the diagram in the theorem commutes. This proves that (ii) holds.

If (i) and (ii) holds, then by AGW criterion, $f$ is a bijection.

\end{proof}

{\bf Remark:} It is not difficult to see that there are precisely $(\sharp S)!$ maps pairs $(\bar{\lambda}, h)$ such that (ii) holds.

We also have the following result by Lemma \ref{le2.3}.

\begin{coro} Let $A, S, \bar{S}$ be finite sets with $\sharp S=\sharp\bar{S}$,  $f: A\to A$ a map,  $\lambda: A\to S$ a surjective map and $h: S\to \bar{S}$ is a bijective. Then $f: A\to A$ is a bijection if and only if the following two conditions hold:

(i)  $f(x)$ is
injective on each $\lambda^{-1}(s)$ for all $s \in S$.

(ii) There  exists a unique determined surjective map   $\bar{\lambda}: A \to \bar{S}$ such that the following diagram commutes.
$$\xymatrix{
  A \ar[d]_{\lambda} \ar[r]^{f}
                & A \ar[d]^{\bar{\lambda}}  \\
  S  \ar[r]_{h}
                & \bar{S}            }
$$
\end{coro}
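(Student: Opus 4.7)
The plan is to derive the corollary directly from Lemma \ref{le2.3}, with the main new content being the uniqueness of $\bar{\lambda}$ once $h$ is fixed. Condition (i) is identical in both statements, so no work is needed there. For the equivalence itself, I would split into the two implications and emphasize that the only difference from Lemma \ref{le2.3} is that here the bijection $h$ is prescribed rather than chosen.

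For the forward direction, assume $f$ is a bijection. First I would invoke the decomposition used in the proof of Lemma \ref{le2.2}: since $f$ is bijective, $A = \uplus_{s \in S}\lambda^{-1}(s) = \uplus_{s \in S} f(\lambda^{-1}(s))$, both being disjoint unions. Then, given the prescribed bijection $h:S \to \bar{S}$, I would define $\bar{\lambda}: A \to \bar{S}$ by setting $\bar{\lambda}(a) = h(s)$ for the unique $s \in S$ such that $a \in f(\lambda^{-1}(s))$. This is well-defined because the blocks are disjoint and cover $A$, and it is surjective because $h$ is surjective. By construction, for any $x \in A$ with $\lambda(x) = s$, one has $f(x) \in f(\lambda^{-1}(s))$ so $\bar{\lambda}(f(x)) = h(s) = h(\lambda(x))$, i.e.\ the diagram commutes. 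Condition (i) then follows from Lemma \ref{le2.3}(i).

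For the converse, if (i) and (ii) hold then in particular the pair $(\bar{\lambda}, h)$ realizes condition (ii) of Lemma \ref{le2.3}, hence $f$ is a bijection.

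The only non-routine point is the uniqueness of $\bar{\lambda}$, which I would isolate as the main observation. Suppose $\bar{\lambda}'$ is any surjective map making the diagram commute. For every $a \in A$, bijectivity of $f$ gives a unique $x \in A$ with $f(x) = a$; setting $s = \lambda(x)$, the commutativity of the diagram forces $\bar{\lambda}'(a) = \bar{\lambda}'(f(x)) = h(\lambda(x)) = h(s)$, which coincides with the value assigned above. Hence $\bar{\lambda}' = \bar{\lambda}$. I expect this uniqueness argument, rather than existence, to be the only step that needs any care, because it is what distinguishes the corollary from Lemma \ref{le2.3} (where $h$ may still be chosen freely, giving $(\sharp S)!$ possible pairs as noted in the remark).
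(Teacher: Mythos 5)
Your proof is correct and follows the same route the paper intends: the paper derives this corollary from Lemma \ref{le2.3} without further argument, and your construction of $\bar{\lambda}$ from the decomposition $A=\uplus_{s\in S}f\left(\lambda^{-1}(s)\right)$ is exactly the one used in the proof of that lemma, now with $h$ prescribed. Your explicit uniqueness argument (forced by $\bar{\lambda}'(f(x))=h(\lambda(x))$ and the surjectivity of $f$) supplies the one detail the paper leaves unstated, and it is the right point to isolate.
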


\begin{lem} Let $A, S$ be finite sets,  $f: A\to A$ a map and $\lambda: A\to S$ a surjective map. Suppose that there is a set $\bar{S}$ and a map $\bar{\lambda}: A \to \bar{S}$ such that $\bar{\lambda}(f(a))= \bar{\lambda}(f(b))$ for any $a, b\in A$ with $\lambda(a)=\lambda(b)$. Then there exists a unique map $h: S\to \bar{S}$ such that the following diagram commutes.
$$\xymatrix{
  A \ar[d]_{\lambda} \ar[r]^{f}
                & A \ar[d]^{\bar{\lambda}}  \\
  S  \ar[r]_{h}
                & \bar{S}            }
$$
Furthermore, $f$ is a bijection  if the following two conditions hold

(i)  $f(x)$ is
injective on each $\lambda^{-1}(s)$ for all $s \in S$.

(ii) $h$ is an injection.

\end{lem}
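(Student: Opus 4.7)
The plan is to split the statement into two independent claims: the existence and uniqueness of $h$, and then the bijectivity of $f$ under the extra assumptions (i) and (ii). The first is a factorisation through a surjection, and the second will be reduced to Lemma \ref{le2.2} which is already available.

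For the factorisation I would proceed as follows. Given $s\in S$, surjectivity of $\lambda$ produces some $a\in\lambda^{-1}(s)$, and I would set $h(s):=\bar{\lambda}(f(a))$. The hypothesis that $\bar{\lambda}(f(a))=\bar{\lambda}(f(b))$ whenever $\lambda(a)=\lambda(b)$ is exactly the condition needed to make this definition independent of the choice of preimage, so $h\colon S\to\bar{S}$ is well-defined and by construction satisfies $h\circ\lambda=\bar{\lambda}\circ f$, i.e.\ the square commutes. Uniqueness is automatic: any map $h'$ rendering the diagram commutative must satisfy $h'(\lambda(a))=\bar{\lambda}(f(a))$ for every $a\in A$, and since $\lambda$ is surjective this determines $h'$ on all of $S$.

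For the bijectivity of $f$ I would invoke Lemma \ref{le2.2}. Condition (i) of the present lemma is literally condition (i) of Lemma \ref{le2.2}, so only condition (ii) of Lemma \ref{le2.2} needs verification. Assume $\lambda(a)\ne\lambda(b)$; then injectivity of $h$ yields $h(\lambda(a))\ne h(\lambda(b))$, and by the commutativity of the diagram this rewrites as $\bar{\lambda}(f(a))\ne\bar{\lambda}(f(b))$. In particular $f(a)\ne f(b)$, which gives condition (ii) of Lemma \ref{le2.2}. Thus $f$ is a bijection.

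I do not expect a real obstacle: both parts are routine, the first being the standard universal property of a set-theoretic quotient by a surjection, the second a short contrapositive argument feeding into Lemma \ref{le2.2}. The only point worth emphasising while writing out the proof is that the hypothesis on fibre-constancy of $\bar{\lambda}\circ f$ plays a double role here, first ensuring that $h$ exists at all, and then, in tandem with injectivity of $h$, transporting the separation of points across the diagram from $S$ to $\bar{S}$ and finally back to $A$.
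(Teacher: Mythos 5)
Your proposal is correct and follows essentially the same route as the paper: define $h(\lambda(a))=\bar{\lambda}(f(a))$ (well-defined by the fibre-constancy hypothesis, unique by surjectivity of $\lambda$), then verify condition (ii) of Lemma \ref{le2.2} by the contrapositive argument via injectivity of $h$. You spell out the well-definedness and uniqueness a bit more explicitly than the paper, which dismisses them as easy to check, but the argument is identical in substance.
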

\begin{proof} Since $\lambda$ is surjective, so it is easy to check that
$$h:\lambda(a)\to \bar{\lambda}(f(a))$$
satisfies $h(\lambda(a))=\bar{\lambda}(f(a))$ for all $a\in A$, so the above diagram commutes.

If (i) and (ii) hold, then for any $a, b\in A$ with $\lambda(a)\ne\lambda(b)$, then $h(\lambda(a))\ne h(\lambda(b))$, and so $\bar{\lambda}(f(a))\ne \bar{\lambda}(f(a))$, which implies that $f(a)\ne f(b)$. By Lemma \ref{le2.2}, we conclude that $f$ is a bijection.

\end{proof}
\begin{lem} Let $A, S$ be finite sets,  $f: A\to A$ a map and $\lambda: A\to S$ a surjective map. Then $f$ is a bijection if and only if the following conditions hold

(i) $f(x)$ is
injective on each $\lambda^{-1}(s)$ for all $s \in S$.

(ii) there exist a set $\bar{S}$, an injective $h: S\to \bar{S}$ and a map $\bar{\lambda}: A\to \bar{S}$ such that the following diagram commutes.$$\xymatrix{
  A \ar[d]_{\lambda} \ar[r]^{f}
                & A \ar[d]^{\bar{\lambda}}  \\
  S  \ar[r]_{h}
                & \bar{S}            }
$$\end{lem}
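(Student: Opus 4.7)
The plan is to prove the equivalence by treating the two directions separately, invoking the preceding lemma of the excerpt for the nontrivial direction and giving an explicit construction of the commutative diagram for the easy direction.

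For the forward direction, assume $f$ is a bijection. Then (i) is immediate since the restriction of an injection remains an injection. For (ii), I would take the cheapest possible choice: set $\bar{S} = S$, let $h = \mathrm{id}_S$ (which is trivially injective), and define $\bar{\lambda} = \lambda \circ f^{-1}$. Since $f$ is a bijection, $f^{-1}$ exists and $\bar{\lambda}$ is a well-defined surjection from $A$ to $S$. Then $\bar{\lambda} \circ f = \lambda \circ f^{-1} \circ f = \lambda = h \circ \lambda$, so the diagram commutes.

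For the backward direction, assume (i) and (ii) and aim to apply the lemma just preceding this one (the one whose commutative diagram is induced from a compatibility condition on $\bar{\lambda}$). The key observation is that commutativity of the given diagram automatically supplies the compatibility hypothesis of that earlier lemma: if $\lambda(a) = \lambda(b)$, then $h(\lambda(a)) = h(\lambda(b))$, and by $\bar{\lambda}\circ f = h\circ\lambda$ we obtain $\bar{\lambda}(f(a)) = \bar{\lambda}(f(b))$. Moreover, the map $h$ built in that earlier lemma by $h(\lambda(a)) := \bar{\lambda}(f(a))$ must coincide with our given $h$, because $\lambda$ is surjective and the commutation $\bar\lambda\circ f = h\circ\lambda$ pins $h$ down on all of $S$. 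Since this reconstructed $h$ is injective by hypothesis, condition (ii) of the earlier lemma holds, and combined with (i) it yields that $f$ is a bijection.

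The only real content is noticing that commutativity of the diagram forces the compatibility condition needed to invoke the prior lemma, so there is no serious obstacle; the argument is essentially a reformulation rather than a new computation. The one subtle point worth remarking is that, unlike Lemma~\ref{le2.3}, we do not require $\sharp S = \sharp \bar{S}$ nor that $h$ be bijective: injectivity of $h$ (together with the commutative square and the fibre-wise injectivity (i)) suffices, because surjectivity of $f$ will follow from a counting argument of the same type used in Lemma~\ref{le2.2}.
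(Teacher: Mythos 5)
Your proof is correct, and both directions hold up; the difference from the paper lies mainly in the forward direction. The paper constructs $\bar{S}$ as a set of labels $[f(\lambda^{-1}(s))]$ indexed by the fibre images, defines $\bar{\lambda}$ on $A=f(A)$ by sending $f(a)$ to the label of the fibre image containing it, and takes $h:s\mapsto [f(\lambda^{-1}(s))]$; your choice $\bar{S}=S$, $h=\mathrm{id}_S$, $\bar{\lambda}=\lambda\circ f^{-1}$ achieves the same end more economically and avoids the well-definedness check (and the small notational slips) in the paper's construction, at the cost of being less suggestive of the ``dual diagram'' picture the paper is building toward, where $\bar{\lambda}$ is meant to encode the partition of $A$ by the sets $f(\lambda^{-1}(s))$. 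For the converse, the paper simply says it follows from Lemma~\ref{le2.2}, i.e.\ the observation that injectivity of $h$ plus commutativity forces $f(a)\ne f(b)$ whenever $\lambda(a)\ne\lambda(b)$; you route this through the immediately preceding lemma (the one producing the unique $h$ from the compatibility condition), but since that lemma's own proof is exactly the reduction to Lemma~\ref{le2.2}, the substance is identical, and your extra remark that the given $h$ must coincide with the reconstructed one (by surjectivity of $\lambda$) is a correct and worthwhile detail. Your closing observation that neither $\sharp S=\sharp\bar{S}$ nor bijectivity of $h$ is needed is also accurate and is precisely what distinguishes this lemma from Lemma~\ref{le2.3}.
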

\begin{proof} Suppose that $f$ is a bijection, then we take
$$\bar{S}=\{[f(\lambda^{-1}(s))], s\in S\},$$
$\bar{\lambda}: f(a)\to [f(\lambda^{-1}(s))]$ for any $a\in \lambda^{-1}(s)$ and $h:s\to [f(\lambda^{-1}(s))]$. Since $f$ is a bijection, so $\bar{\lambda}$ is a well-defined surjective map. Moreover, $\bar{\lambda}(f(a))= \bar{\lambda}(f(b))=[f(\lambda^{-1}(s))]$ if $f(a)=f(b)=s$ and $\sharp(S)=\sharp(\bar{S})$.

The other direction follows from Lemma 2.2.\end{proof}

\begin{lem}  Let $A, B$ be finite sets,  $f: A\to A$ a map and $g: B\to A$ a surjective map. Then $f$ is a bijection if and only if $g(f(x))$ is a surjection.\end{lem}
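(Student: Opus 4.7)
The plan is to first interpret the composition in the statement: since $g:B\to A$ and $f:A\to A$, the only composition that type-checks is $f\circ g:B\to A$, so the notation $g(f(x))$ printed in the statement should be read as $f(g(x))$. With this reading, the proof reduces to a direct set-theoretic computation plus one invocation of finiteness, and does not require any of the AGW machinery developed earlier in the section.

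For the forward implication, I would assume $f$ is a bijection, so in particular $f(A)=A$. Combining this with the hypothesis $g(B)=A$ gives $(f\circ g)(B)=f(g(B))=f(A)=A$, which is exactly the surjectivity of $f\circ g$. Here both steps are just the identity $(f\circ g)(B)=f(g(B))$ applied once.

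For the reverse implication, I would start from the assumption that $f\circ g:B\to A$ is surjective, i.e.\ $f(g(B))=A$. Since $g(B)\subseteq A$, we have $A=f(g(B))\subseteq f(A)\subseteq A$, so $f(A)=A$ and $f:A\to A$ is surjective. At this point finiteness of $A$ is used to upgrade surjectivity to bijectivity: a surjective self-map of a finite set is automatically injective, hence a bijection.

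I do not foresee any real obstacle. The whole argument is two applications of $(f\circ g)(B)=f(g(B))$ together with the standard pigeonhole fact, and the role of $g$ being a surjection is only to ensure that $g(B)$ exhausts $A$ so that the chain of inclusions collapses to an equality. The only point that requires care is the typographical issue with the composition in the statement, which must be resolved in the manner above for the assertion to even be well posed.
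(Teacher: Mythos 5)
Your proof is correct, and since the paper's own ``proof'' of this lemma is literally the single word ``Obviously,'' your argument is just the standard one being left implicit: read the composition as $f\circ g$ (the only one that type-checks), note $f(g(B))=f(A)=A$ for the forward direction, and use the inclusion chain $A=f(g(B))\subseteq f(A)\subseteq A$ plus the pigeonhole principle on the finite set $A$ for the converse. Your identification and resolution of the typographical issue in the statement is a worthwhile observation that the paper does not address.
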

\begin{proof} Obviously.\end{proof}
The following result is essential in this paper, which will be used in the sequel.
\begin{thm}\label{M}Let the notations be defined as in Lemma \ref{lem-1.1}. If $f:A\to A$ is a bijection, $f^{-1}$ and $h^{-1}$ are the compositional inverses of $f$ and $h$, respectively, then the have $\lambda\circ f^{-1}=h^{-1}\circ\bar{\lambda} $, i.e. the following diagram commutes
$$\xymatrix{
  A \ar[d]_{\bar{\lambda}} \ar[r]^{f^{-1}}
                & A \ar[d]^{\lambda}  \\
  \bar{S}  \ar[r]_{h^{-1}}
                & S           }
$$ \end{thm}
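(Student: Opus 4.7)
The plan is to obtain the dual commutation $\lambda\circ f^{-1}=h^{-1}\circ\bar{\lambda}$ by straightforward algebraic manipulation of the original AGW commutation $\bar{\lambda}\circ f=h\circ\lambda$, inverting the two horizontal arrows of the commuting square in Lemma \ref{lem-1.1}.

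First I would verify that the objects appearing in the target equation are well-defined. Since $f:A\to A$ is assumed to be a bijection and $\lambda, \bar{\lambda}$ are surjective, the AGW criterion (Lemma \ref{lem-1.1}) guarantees that $h:S\to\bar{S}$ is a bijection as well, so $h^{-1}:\bar{S}\to S$ exists. Combined with $f^{-1}:A\to A$, both compositions $\lambda\circ f^{-1}$ and $h^{-1}\circ\bar{\lambda}$ are well-defined maps from $A$ to $S$ (after we identify $\bar{\lambda}$ as going from $A$ to $\bar{S}$ and then applying $h^{-1}$).

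Then I would carry out the manipulation in two steps. Starting from the hypothesis $\bar{\lambda}\circ f=h\circ\lambda$, precompose both sides with $f^{-1}$ on the right to obtain
$$\bar{\lambda}=\bar{\lambda}\circ f\circ f^{-1}=h\circ\lambda\circ f^{-1}.$$
Then postcompose on the left with $h^{-1}$ to get
$$h^{-1}\circ\bar{\lambda}=h^{-1}\circ h\circ\lambda\circ f^{-1}=\lambda\circ f^{-1},$$
which is exactly the asserted commutativity of the dual diagram.

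There is essentially no obstacle here beyond bookkeeping the domains and codomains of the maps involved; the only nontrivial input is the fact that $h$ is automatically a bijection, which is supplied by the AGW criterion itself. The value of the theorem lies not in the proof but in the statement: the dual square lets one translate compositional inverses of $f$ on $A$ into compositional inverses of the (typically much simpler) map $h$ on $S$, which is what Sections 3 and 4 will exploit.
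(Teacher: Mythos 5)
Your proposal is correct and follows essentially the same route as the paper: both start from $\bar{\lambda}\circ f=h\circ\lambda$ and compose with $f^{-1}$ on the right and $h^{-1}$ on the left to cancel $f$ and $h$ (the paper does both compositions in one step, you do them in two). Your preliminary observation that the AGW criterion guarantees $h$ is a bijection is a reasonable extra justification for the existence of $h^{-1}$, which the paper simply takes as given in the hypothesis.
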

\begin{proof} By assumption, we have $\bar{\lambda}\circ f=h\circ \lambda$, hence
$$h^{-1}\circ(\bar{\lambda}\circ f)\circ f^{-1}=h^{-1}\circ (h\circ \lambda)\circ f^{-1},$$
which yields $\lambda\circ f^{-1}=h^{-1}\circ\bar{\lambda} $. This completes the proof.\end{proof}
We call the diagram in Theorem \ref{M} the dual diagram of the AGW criterion.

\section{Compositional inverses of AGW-PPs}
In this section, we present one approach to finding the compositional inverses of AGW-PPs by using the dual diagram of the AGW criterion. Our first result is as follows.
\begin{thm}\label{Thm1} Let $q$ be a prime power, and $S, \bar{S}$  subsets of $\mathbb{F}_q^\ast$ with $\sharp S=\sharp\bar{S}$.
 Let  $f:\mathbb{F}_q^\ast\rightarrow \mathbb{F}_q^\ast, g:
S\rightarrow\bar{S}$, $\lambda: \mathbb{F}_q^\ast\rightarrow S$, and $\bar{\lambda}:
\mathbb{F}_q^\ast\rightarrow \bar{S}$ be maps such that both $\lambda$ and $\bar{\lambda}$ are surjective maps and $\bar{\lambda}\circ
f=g\circ\lambda$.

Let $f_1(x)$ be a PP and  $f(x)=f_1(x)h(\lambda(x))$  a AGW-PP over $\mathbb{F}_q^\ast$, and let $f_1^{-1}(x), f^{-1}(x)$ and $g^{-1}(x)$ be the compositional inverses of $f_1(x), f(x)$ and $g(x)$, respectively. Then we have
$$f^{-1}(x)=f_1^{-1}\left(\frac{x}{h(g^{-1}(\bar{\lambda}(x)))}\right).$$\end{thm}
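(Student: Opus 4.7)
The plan is to derive the formula by directly applying the dual diagram from Theorem \ref{M} and then unwinding the product definition of $f$. Since $f$ is assumed to be a bijection fitting into the commutative square $\bar{\lambda}\circ f = g\circ\lambda$ with $\lambda,\bar{\lambda}$ surjective and $g$ necessarily bijective (by the AGW criterion applied to the given data), Theorem \ref{M}, applied with its internal ``$h$'' replaced by our $g$, immediately yields
$$\lambda\circ f^{-1} = g^{-1}\circ\bar{\lambda}.$$
The remainder of the argument is a short algebraic manipulation; no new combinatorial input is required.

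Concretely, I would pick an arbitrary $y\in\mathbb{F}_q^\ast$ and set $x = f^{-1}(y)$. Evaluating the dual identity at $y$ gives $\lambda(x) = g^{-1}(\bar{\lambda}(y))$. Plugging this into the defining product $y = f(x) = f_1(x)\,h(\lambda(x))$ produces
$$y = f_1(x)\cdot h\bigl(g^{-1}(\bar{\lambda}(y))\bigr),$$
so dividing and applying $f_1^{-1}$ (which exists by hypothesis) gives the advertised formula for $f^{-1}(y)$, after renaming $y$ back to $x$.

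The only technical point worth a remark is that the division above takes place in $\mathbb{F}_q^\ast$, i.e.\ that $h(g^{-1}(\bar{\lambda}(y)))\neq 0$. This is automatic: since $f_1$ permutes $\mathbb{F}_q^\ast$ we have $f_1(x)\in\mathbb{F}_q^\ast$, and since $y=f_1(x)h(\lambda(x))\in\mathbb{F}_q^\ast$ we must have $h(\lambda(x))\neq 0$; under the substitution $\lambda(x)=g^{-1}(\bar{\lambda}(y))$ this is precisely the denominator in the claimed formula. I do not expect any real obstacle here; the theorem is essentially a clean corollary of Theorem \ref{M}, and its interest lies in packaging the inverse in a form ready to be specialized to concrete AGW-PP families later in the section.
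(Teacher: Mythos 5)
Your proposal is correct and follows essentially the same route as the paper: apply Theorem \ref{M} to obtain $\lambda\circ f^{-1}=g^{-1}\circ\bar{\lambda}$, substitute into $f(f^{-1}(x))=f_1(f^{-1}(x))\,h(\lambda(f^{-1}(x)))=x$, divide, and apply $f_1^{-1}$. Your extra remark that the denominator $h(g^{-1}(\bar{\lambda}(x)))$ is nonzero because everything lives in $\mathbb{F}_q^\ast$ is a small point the paper leaves implicit, but otherwise the two arguments coincide.
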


\begin{proof} By assumption and Theorem \ref{M}, we have the following commutative diagram
\begin{center}
 \quad\xymatrix{
  \mathbb{F}_q^\ast \ar[d]_{\lambda} \ar[r]^{f} & \mathbb{F}_q^\ast \ar[d]_{\bar{\lambda}} \ar[r]^{f^{-1}} & \mathbb{F}_q^\ast \ar[d]^{\lambda} \\
  S \ar[r]^{g} & \bar{S} \ar[r]^{g^{-1}} & S   }
\end{center}
Hence $\lambda(f^{-1}(x))=g^{-1}(\bar{\lambda}(x))$. Since $f^{-1}(x)$ is the compositional inverse
of $f(x)$, we have $f(f^{-1}(x))=x$, that is
$$f_1(f^{-1}(x))h(\lambda(f^{-1}(x)))=x.$$
It follows that $f_1(f^{-1}(x))=\frac{x}{h(\lambda(f^{-1}(x)))}=\frac{x}{h(g^{-1}(\bar{\lambda}(x)))}$, which implies that
$$f^{-1}(x)=f_1^{-1}\left(\frac{x}{h(g^{-1}(\bar{\lambda}(x)))}\right).$$
This completes the proof.\end{proof}

{\bf Remark:} In Theorem \ref{Thm1}, we use $\mathbb{F}_q^\ast$ to avoid the case of $x=0$. If we use $\mathbb{F}_q$, then we must consider the case of $x=0$ independently.

For AGW-PPs in the hybrid case, we have
\begin{lem}\label{le21} {\rm (\cite[Theorem 6.3]{AGW11})} Let $q$ be any power of the
prime number $p$, let $n$ be any positive integer, and let $S$ be any
subset of $\mathbb{F}_{q^n}$ containing $0$. Let $h, k \in \mathbb{F}_{q^n}$ be any polynomials
such that $h(0) \ne 0$ and $k(0) = 0$, and let $\lambda(x)\in\mathbb{F}_{q^n}[x]$ be
any polynomial satisfying

(1) $h(\lambda(\mathbb{F}_{q^n} ))\subseteq S$; and

(2) $\lambda(a\alpha) = k(a)\lambda(\alpha)$ for all $a\in S$ and all $a\in\mathbb{F}_{q^n}$.
Then the polynomial $f(x) = xh(\lambda(x))$ is a PP for $\mathbb{F}_{q^n}$ if and
only if $g(x) = xk(h(x))$ induces a permutation of $\lambda(\mathbb{F}_{q^n})$.\end{lem}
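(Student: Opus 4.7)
The plan is to apply the AGW criterion (Lemma~\ref{lem-1.1}) with $A = \mathbb{F}_{q^n}$ and $S = \bar{S} = \lambda(\mathbb{F}_{q^n})$, taking both $\lambda$ and $\bar{\lambda}$ to be $\lambda$ viewed as a surjection onto its image. The polynomial $g(x) = x\, k(h(x))$ will play the role of the horizontal map on the bottom row of the commutative square.

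The first step is to verify the commutativity $\bar{\lambda}\circ f = g\circ\lambda$. For any $x\in\mathbb{F}_{q^n}$, condition~(1) gives $h(\lambda(x))\in S$, so condition~(2) applies with $a = h(\lambda(x))$ and $\alpha = x$ to yield
$$\lambda(f(x)) = \lambda\bigl(h(\lambda(x))\cdot x\bigr) = k(h(\lambda(x)))\cdot \lambda(x) = g(\lambda(x)).$$
Hence the AGW diagram commutes, and by Lemma~\ref{lem-1.1}, $f$ is a PP of $\mathbb{F}_{q^n}$ if and only if both (a)~$g$ permutes $\lambda(\mathbb{F}_{q^n})$, and (b)~$f$ is injective on each fiber $\lambda^{-1}(s)$ for $s\in\lambda(\mathbb{F}_{q^n})$. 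The forward implication is immediate from (a), so only the reverse implication requires further analysis.

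The main subtlety is that (a) alone should already force the fiber-injectivity in (b). On a fiber $\lambda^{-1}(s)$ we have $f(x) = h(s)\cdot x$, which is injective exactly when $h(s)\ne 0$. For $s = 0$ this holds by the standing hypothesis $h(0)\ne 0$. For a nonzero $s\in\lambda(\mathbb{F}_{q^n})$, I would argue as follows: since $g(0) = 0\cdot k(h(0)) = 0$ and $g$ is a bijection on $\lambda(\mathbb{F}_{q^n})$, we must have $g(s) = s\,k(h(s))\ne 0$; consequently $k(h(s))\ne 0$, and combined with $k(0) = 0$ this forces $h(s)\ne 0$. This verifies (b), closing the reverse implication and completing the proof.
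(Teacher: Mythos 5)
The paper does not prove this lemma; it is quoted verbatim from \cite[Theorem~6.3]{AGW11}, so there is no in-paper argument to compare against. Your proof is correct and is exactly the intended AGW-criterion argument: the commutativity check via conditions (1) and (2), and the observation that bijectivity of $g$ on $\lambda(\mathbb{F}_{q^n})$ already forces $h(s)\ne 0$ on every fiber, are both sound. The only step you use tacitly is that $0\in\lambda(\mathbb{F}_{q^n})$, which is needed before you may say ``$g(0)=0$ and $g$ is a bijection on $\lambda(\mathbb{F}_{q^n})$''; this follows immediately from applying condition (2) with $a=0\in S$, giving $\lambda(0)=k(0)\lambda(0)=0$. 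With that one line added, the argument is complete.
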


We apply Theorem \ref{Thm1} to $f_1(x)=x$ to obtain the following corollary, which improves Theorem 22 in \cite{NLQW21}.
\begin{coro} Let the symbols be defined as in Lemma \ref{le21}.
Let $f(x) = xh(\lambda(x))$ permute $\mathbb{F}_{q^n}$ and $g^{-1}(x)$ be the
compositional inverse of $g(x) = xk(h(x))$ over $\lambda(\mathbb{F}_{q^n})$. Then
the compositional inverse of $f(x)$ is given by
$$f^{-1}(x) =\frac{x}{h(g^{-1}(\lambda(x)))}.$$\end{coro}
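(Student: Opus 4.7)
The plan is to deduce the corollary directly from Theorem \ref{Thm1} by specializing $f_1(x)=x$ (so that $f_1^{-1}(x)=x$) and identifying the two maps $\lambda,\bar\lambda$ with a single map. First I would set $S=\bar S=\lambda(\mathbb{F}_{q^n})$, take $\lambda$ as in Lemma \ref{le21}, and declare $\bar\lambda=\lambda$. Then Theorem \ref{Thm1} will immediately give
$$f^{-1}(x)=f_1^{-1}\!\left(\frac{x}{h(g^{-1}(\bar\lambda(x)))}\right)=\frac{x}{h(g^{-1}(\lambda(x)))},$$
which is the desired formula. So everything reduces to verifying that the hypotheses of Theorem \ref{Thm1} are met in this setting.

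The one substantive check is the commutativity $\bar\lambda\circ f=g\circ\lambda$, i.e.\ $\lambda(f(x))=g(\lambda(x))$ for all $x$. I would use hypothesis (1) of Lemma \ref{le21} to note that $a:=h(\lambda(x))$ lies in $S$, so that hypothesis (2) applies with this $a$ and $\alpha=x$, giving
$$\lambda(f(x))=\lambda\bigl(h(\lambda(x))\cdot x\bigr)=k(h(\lambda(x)))\,\lambda(x)=g(\lambda(x)).$$
Since $f$ is assumed to permute $\mathbb{F}_{q^n}$, Lemma \ref{le21} guarantees that $g$ permutes $\lambda(\mathbb{F}_{q^n})$, so $g^{-1}$ exists on $\bar S=\lambda(\mathbb{F}_{q^n})$ as required by Theorem \ref{Thm1}. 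Surjectivity of $\lambda$ and $\bar\lambda$ onto $\lambda(\mathbb{F}_{q^n})$ is automatic by definition.

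The main (minor) obstacle is that Theorem \ref{Thm1} is stated over $\mathbb{F}_q^\ast$, while the corollary lives on all of $\mathbb{F}_{q^n}$; as the remark after Theorem \ref{Thm1} warns, the point $x=0$ must be handled separately. I would dispose of this as follows: applying property (2) of Lemma \ref{le21} with $\alpha=0$ and any $a\in S$ yields $\lambda(0)=k(a)\lambda(0)$, forcing $\lambda(0)=0$ (otherwise $k$ would be identically $1$ on $S\ni 0$, contradicting $k(0)=0$). Consequently $g(0)=0\cdot k(h(0))=0$, so $g^{-1}(\lambda(0))=g^{-1}(0)=0$, and then $h(g^{-1}(\lambda(0)))=h(0)\ne 0$ by hypothesis, so the formula evaluates to $0/h(0)=0=f^{-1}(0)$. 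This confirms the formula at $x=0$, and the rest follows verbatim from Theorem \ref{Thm1}.
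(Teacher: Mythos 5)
Your proof is correct and takes exactly the paper's route: the paper obtains this corollary by applying Theorem \ref{Thm1} with $f_1(x)=x$ and $\bar\lambda=\lambda$ on $S=\bar S=\lambda(\mathbb{F}_{q^n})$, which is precisely your argument. The only difference is that you explicitly verify the commutativity $\lambda\circ f=g\circ\lambda$ via hypotheses (1) and (2) of Lemma \ref{le21} and check the excluded point $x=0$ separately (showing $\lambda(0)=0$ and $f^{-1}(0)=0$), details the paper leaves implicit.
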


The following result was discovered independently by several authors, and it can be proved by AGW criterion.

\begin{lem}\label{le10} {\rm (\cite[Theorem 2.3]{PL01} \cite[Theorem 1]{W07} \cite[Lemma 2.1]{Z09})}Let $q$ be a prime power and $f(x)=x^rh(x^s)\mathbb{F}_q[x]$, where $s=\frac{q-1}{\ell}$ and $\ell$ is an integer. Then $f(x)$ permutes $\mathbb{F}_q$ if and only if

(1) $\gcd(r, \, s)=1$ and

(2) $g(x)=x^rh(x)^s$ permutes $\mu_{\ell}$. \end{lem}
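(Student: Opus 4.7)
The plan is to apply the AGW criterion (Lemma \ref{lem-1.1}) to the restriction of $f$ to $\mathbb{F}_q^\ast$, taking $A = \mathbb{F}_q^\ast$, $S = \bar{S} = \mu_\ell$ (the group of $\ell$-th roots of unity in $\mathbb{F}_q^\ast$), and $\lambda = \bar{\lambda}\colon \mathbb{F}_q^\ast \to \mu_\ell$ defined by $x \mapsto x^s$, with $g$ playing the role of the intermediate map. Because $s\ell = q-1$, the power map $x \mapsto x^s$ is a surjective group homomorphism onto $\mu_\ell$, and its fibers are exactly the cosets of the cyclic subgroup $\mu_s \le \mathbb{F}_q^\ast$. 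The commutativity of the AGW diagram is an immediate calculation:
$$\bar{\lambda}(f(x)) = (x^r h(x^s))^s = (x^s)^r h(x^s)^s = g(\lambda(x)).$$
Taking $r \ge 1$ (the other case being trivial), one has $f(0)=0$, so $f$ permutes $\mathbb{F}_q$ if and only if it permutes $\mathbb{F}_q^\ast$, and this restriction is enough.

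By the AGW criterion, $f$ permutes $\mathbb{F}_q^\ast$ if and only if (a) $g$ permutes $\mu_\ell$ and (b) $f$ is injective on every fiber $\lambda^{-1}(c)$ for $c \in \mu_\ell$. Condition (a) is precisely condition (2) of the lemma, so the entire task reduces to identifying (b) with the condition $\gcd(r,s)=1$ under the assumption that (a) holds. For that reduction, I would first observe that (a) forces $h(c) \ne 0$ for every $c \in \mu_\ell$, since otherwise $g(c) = 0 \notin \mu_\ell$. Writing such a fiber as $\lambda^{-1}(c) = a\mu_s$, the map $f$ restricts on this coset to $x \mapsto h(c)\, x^r$, whose injectivity on $a\mu_s$ is equivalent to the injectivity of $\zeta \mapsto \zeta^r$ on the cyclic group $\mu_s$, i.e., to $\gcd(r,s)=1$. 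Running this in both directions gives the stated equivalence.

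The main obstacle is really just the bookkeeping to combine the two directions of the \emph{iff}: condition (2) must be invoked to exclude the degenerate possibility that $h$ vanishes somewhere on $\mu_\ell$, after which the fiber-injectivity analysis collapses to the elementary fact that a power map permutes a finite cyclic group if and only if its exponent is coprime to the order of the group. No deeper arithmetic input is required beyond these two ingredients.
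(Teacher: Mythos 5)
The paper does not prove this lemma at all: it is quoted as a known result from \cite{PL01}, \cite{W07} and \cite{Z09}, with only the remark that ``it can be proved by AGW criterion.'' Your argument is correct and is precisely that AGW proof --- taking $A=\mathbb{F}_q^\ast$, $S=\bar S=\mu_\ell$, $\lambda=\bar\lambda\colon x\mapsto x^s$, verifying $\bar\lambda\circ f=g\circ\lambda$, handling the degenerate vanishing of $h$ on $\mu_\ell$ in both directions, and reducing fiber-injectivity on the cosets $a\mu_s$ to $\gcd(r,s)=1$ --- so it supplies exactly the proof the paper alludes to but omits.
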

Applying Theorem \ref{Thm1} to $f_1(x)=x^r$, we obtain the following result, which is a same as in \cite[Theorem 2.3]{LQW19}.
\begin{coro} Let $f(x)=x^rh(x^s)\in\mathbb{F}_q[x]$ defined in Lemma \ref{le10} be a permutation over $\mathbb{F}_q$ and $g^{-1}(x)$ be the compositional inverse of $g(x)=x^rh(x)^s$ over $\mu_{\ell}$. If $\gcd(r, q-1)=1$ and $a$ and $b$ are two positive integers satisfying $br=1+a(q-1)$. Then the compositional inverse of $f(x)$ in $\mathbb{F}_q[x]$ is given by
$$f^{-1}(x)=x^bh(g^{-1}(x^s))^{-b}.$$
\end{coro}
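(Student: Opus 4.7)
The plan is to specialize Theorem~\ref{Thm1} to the case $f_1(x) = x^r$, with $\lambda(x) = \bar{\lambda}(x) = x^s$ and $S = \bar{S} = \mu_\ell$, so that the correction factor $h(\lambda(x)) = h(x^s)$ recovers the shape $f(x) = x^r h(x^s)$ prescribed by Lemma~\ref{le10}.

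First I would verify the setup. Lemma~\ref{le10} supplies the PP hypothesis on $f$, and the required commutative square $\bar{\lambda}\circ f = g\circ\lambda$ reduces to the identity
$$\bigl(x^r h(x^s)\bigr)^s = (x^s)^r\, h(x^s)^s,$$
so all hypotheses of Theorem~\ref{Thm1} hold with $g(x) = x^r h(x)^s$ permuting $\mu_\ell$ and $\bar{\lambda}(x) = x^s$.

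Next I would compute $f_1^{-1}$. Since $\gcd(r, q-1) = 1$, the map $x \mapsto x^r$ permutes $\mathbb{F}_q^\ast$, and the relation $br = 1 + a(q-1)$ (i.e.\ $br \equiv 1 \pmod{q-1}$) immediately gives $f_1^{-1}(y) = y^b$ on $\mathbb{F}_q^\ast$. Plugging this into the formula from Theorem~\ref{Thm1} yields
$$f^{-1}(x) = f_1^{-1}\!\left(\frac{x}{h(g^{-1}(\bar{\lambda}(x)))}\right) = \left(\frac{x}{h(g^{-1}(x^s))}\right)^{b} = x^b\, h(g^{-1}(x^s))^{-b},$$
which is exactly the claimed expression.

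The only delicate point, and hence the main obstacle, will be the boundary value $x = 0$: Theorem~\ref{Thm1} is stated over $\mathbb{F}_q^\ast$ whereas the corollary advertises a polynomial identity in $\mathbb{F}_q[x]$. However, $f(0) = 0$ forces $f^{-1}(0) = 0$, and the prefactor $x^b$ on the right-hand side already vanishes at $0$ for any polynomial representative of $h(g^{-1}(x^s))^{-b}$, so this case adds no substantive work beyond the observation that the formula is consistent at the origin.
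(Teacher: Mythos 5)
Your proposal is correct and follows exactly the paper's route: the paper likewise applies Theorem~\ref{Thm1} with $f_1(x)=x^r$, noting that $br=1+a(q-1)$ makes $x^b$ the compositional inverse of $x^r$, and reads off the formula. Your added verifications of the commuting square $\bigl(x^rh(x^s)\bigr)^s=(x^s)^rh(x^s)^s$ and of the behaviour at $x=0$ are details the paper leaves implicit, but they do not change the argument.
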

\begin{proof} Since $\gcd(r, q-1)=1$ and $a$ and $b$ are two positive integers satisfying $br=1+a(q-1)$, the compositional inverse of $x^r$ is $x^b$. Applying Theorem \ref{Thm1} to $f_1(x)=x^r$, we  obtain the desired result. \end{proof}

For the compositional inverses of AGW-PPs in the additive case, we have
\begin{thm}\label{Thm2}Let $q$ be a prime power, and let $S, \bar{S}$ be subsets $\mathbb{F}_q$ with $\sharp S=\sharp\bar{S}$. Let  $f:\mathbb{F}_q\rightarrow \mathbb{F}_q, g:
S\rightarrow\bar{S}$, $\lambda: \mathbb{F}_q\rightarrow S$, and $\bar{\lambda}:
\mathbb{F}_q\rightarrow \bar{S}$ be maps such that both $\lambda$ and $\bar{\lambda}$ are surjective maps and $\bar{\lambda}\circ
f=g\circ\lambda$.

Let $f_1(x)$ be a PP and  $f(x)=f_1(x)+h(\lambda(x))$  a AGW-PP over $\mathbb{F}_q$, and let $f_1^{-1}(x), f^{-1}(x)$ and $g^{-1}(x)$ be the compositional inverses of $f(x), f_1(x)$ and $g(x)$, respectively. Then we have
$$f^{-1}(x)=f_1^{-1}\left(x-h(g^{-1}(\bar{\lambda}(x)))\right).$$\end{thm}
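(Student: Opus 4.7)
The plan is to mirror the proof of Theorem \ref{Thm1} almost verbatim, replacing the multiplicative decomposition $f(x)=f_1(x)h(\lambda(x))$ with the additive decomposition $f(x)=f_1(x)+h(\lambda(x))$. The engine of both arguments is the dual diagram supplied by Theorem \ref{M}: since $\bar{\lambda}\circ f=g\circ\lambda$ and $f$ is a bijection (so $g$ is a bijection by the AGW criterion), we automatically obtain
$$\lambda\circ f^{-1}=g^{-1}\circ\bar{\lambda}.$$
This is the one non-trivial input; everything else is algebraic manipulation.

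First I would apply Theorem \ref{M} to the hypotheses to produce the displayed identity $\lambda(f^{-1}(x))=g^{-1}(\bar{\lambda}(x))$, which is the key identity making the compositional inverse explicitly computable. Next I would substitute $f^{-1}(x)$ into the defining identity $f(f^{-1}(x))=x$, using the additive decomposition:
$$f_1(f^{-1}(x))+h(\lambda(f^{-1}(x)))=x.$$
Isolating the $f_1$ term and plugging in the identity from the dual diagram gives
$$f_1(f^{-1}(x))=x-h\bigl(g^{-1}(\bar{\lambda}(x))\bigr).$$
Finally, since $f_1$ is assumed to be a PP with compositional inverse $f_1^{-1}$, applying $f_1^{-1}$ to both sides yields the claimed formula
$$f^{-1}(x)=f_1^{-1}\bigl(x-h(g^{-1}(\bar{\lambda}(x)))\bigr).$$

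There is no real obstacle here beyond verifying that the dual diagram of Theorem \ref{M} applies in the additive setting; but the statement of Theorem \ref{M} is purely set-theoretic and makes no use of the multiplicative structure of $\mathbb{F}_q^\ast$, so it carries over without modification to $\mathbb{F}_q$. The only subtlety worth flagging in a remark is that, unlike the multiplicative version, the denominator $h(g^{-1}(\bar{\lambda}(x)))$ is replaced by a subtraction, so one need not separately address $x=0$; the theorem holds uniformly on all of $\mathbb{F}_q$.
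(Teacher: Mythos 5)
Your proposal is correct and follows essentially the same route as the paper: invoke the dual diagram of Theorem \ref{M} to get $\lambda(f^{-1}(x))=g^{-1}(\bar{\lambda}(x))$, substitute into $f(f^{-1}(x))=x$ using the additive decomposition, and apply $f_1^{-1}$. No gaps.
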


\begin{proof} By assumption and Theorem \ref{M}, we have the following commutative diagram
\begin{center}
 \quad
\xymatrix{
  \mathbb{F}_q \ar[d]_{\lambda} \ar[r]^{f} & \mathbb{F}_q \ar[d]_{\bar{\lambda}} \ar[r]^{f^{-1}} & \mathbb{F}_q \ar[d]^{\lambda} \\
  S \ar[r]^{g} & \bar{S} \ar[r]^{g^{-1}} & S   }\end{center}
Hence $\lambda(f^{-1}(x))=g^{-1}(\bar{\lambda}(x))$. Since $f^{-1}(x)$ is the compositional inverse
of $f(x)$, we have $f(f^{-1}(x))=x$, that is
$$f_1(f^{-1}(x))+h(\lambda(f^{-1}(x)))=x.$$
It follows that $f_1(f^{-1}(x))=x-h(\lambda(f^{-1}(x)))=x-h(g^{-1}(\bar{\lambda}(x)))$, which implies
$$f^{-1}(x)=f_1^{-1}\left(x-h(g^{-1}(\bar{\lambda}(x)))\right).$$
This completes the proof.\end{proof}

 \begin{lem}\label{le15}{\rm (\cite[Theorem 6.1]{YD11})}: Assume that $F$ is a finite
field and $S, \bar{S}$ are finite subsets of $F$ with $\sharp S =\sharp \bar{S}$ such
that the maps $\lambda : F \to S$ and $\bar{\lambda} : F \to \bar{S}$ are surjective and $\bar{\lambda}$
is additive, i.e.,
$$\bar{\lambda}(x + y) = \bar{\lambda}(x) + \bar{\lambda}(y), x, y \in F.$$
Let $g_0 : S\to F$, and $g : F \to F$ be maps such that
$$\bar{\lambda}\circ(g + g_0\circ\lambda) = g\circ\lambda,$$
$g(S) = \bar{S}$ and $\bar{\lambda}(g_0(\lambda(x))) = 0$ for every $x\in F$. Then the
map $f(x) = g(x) + g_0(\lambda(x))$ permutes $F$ if and only if $g$ permutes $F$.\end{lem}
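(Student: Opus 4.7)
The plan is to deduce this lemma directly from the AGW criterion (Lemma \ref{lem-1.1}) by taking the companion map to be $h := g|_S \colon S \to \bar{S}$. The hypothesis $g(S) = \bar{S}$ together with $\sharp S = \sharp \bar{S}$ makes $h$ automatically a bijection, and the displayed identity $\bar\lambda \circ (g + g_0 \circ \lambda) = g \circ \lambda$ is nothing but the commutativity $\bar\lambda \circ f = h \circ \lambda$ for $f(x) = g(x) + g_0(\lambda(x))$. So the AGW diagram is already in place from the start.

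For the implication ``$g$ permutes $F \Rightarrow f$ permutes $F$'', the only remaining AGW hypothesis to verify is that $f$ is injective on each fiber $\lambda^{-1}(s)$. If $\lambda(x) = \lambda(y)$, then $g_0(\lambda(x)) = g_0(\lambda(y))$, so $f(x) = f(y)$ immediately gives $g(x) = g(y)$, and bijectivity of $g$ on $F$ forces $x = y$. Lemma \ref{lem-1.1} then yields that $f$ is a permutation of $F$.

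For the converse, one must argue directly, since AGW only delivers bijectivity of the restriction $g|_S$, whereas we need bijectivity of $g$ on all of $F$. Assume $f$ permutes $F$ and suppose $g(x) = g(y)$ for some $x, y \in F$. Then $f(x) - f(y) = g_0(\lambda(y)) - g_0(\lambda(x))$, and applying the additive map $\bar\lambda$ together with the orthogonality hypothesis $\bar\lambda(g_0(\lambda(z))) = 0$ kills the right-hand side, giving $\bar\lambda(f(x)) = \bar\lambda(f(y))$. The AGW square rewrites this as $g(\lambda(x)) = g(\lambda(y))$, and bijectivity of $g|_S$ then forces $\lambda(x) = \lambda(y)$. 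Hence $g_0(\lambda(x)) = g_0(\lambda(y))$, which combined with $g(x) = g(y)$ gives $f(x) = f(y)$, so $x = y$ by injectivity of $f$. Thus $g$ is injective, hence a permutation of the finite set $F$.

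The main obstacle is precisely this converse direction: AGW-style reasoning only yields bijectivity on the quotient $S$, and to lift it back to the whole of $F$ one must use both the additivity of $\bar\lambda$ and the vanishing $\bar\lambda(g_0(\lambda(\cdot))) \equiv 0$ in tandem. This is where the two extra hypotheses in the statement are genuinely consumed, whereas the forward direction is essentially a bookkeeping application of AGW.
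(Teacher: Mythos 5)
The paper does not prove this lemma at all: it is quoted verbatim as \cite[Theorem 6.1]{YD11}, so there is no internal proof to compare against. Your argument is correct and complete. The forward direction is exactly the expected application of Lemma \ref{lem-1.1} with $h=g|_S$ (a bijection by $g(S)=\bar S$ and $\sharp S=\sharp\bar S$), and your converse correctly isolates where additivity of $\bar\lambda$ and the vanishing $\bar\lambda(g_0(\lambda(\cdot)))=0$ are consumed. One cosmetic slip: if $g(x)=g(y)$ then $f(x)-f(y)=g_0(\lambda(x))-g_0(\lambda(y))$, not the negative of that; it is harmless since $\bar\lambda$ annihilates both sides either way. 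For comparison, the argument in \cite{YD11} (and the natural AGW-style proof) is symmetric in $f$ and $g$: the two extra hypotheses give $\bar\lambda\circ f=\bar\lambda\circ g=g\circ\lambda$, so $g$ sits in the \emph{same} commutative square with the same $h=g|_S$, and since $f$ and $g$ differ by the constant $g_0(s)$ on each fiber $\lambda^{-1}(s)$, $f$ is injective on a fiber iff $g$ is; Lemma \ref{lem-1.1} applied to both maps then gives the equivalence in one stroke. That route avoids your separate hands-on converse, but yours is equally valid and makes the role of each hypothesis more explicit.
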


We apply Theorem \ref{Thm2} to $f_1(x)=g(x)$, we obtain Theorem 16 in \cite{NLQW21}.
\begin{coro}{\rm (\cite[Theorem 16]{NLQW21})}: Let the symbols be defined as in Lemma \ref{le15}.
Let $f(x) = g(x) + g_0(\lambda(x))$ be a permutation over $F$ and
$g^{-1}(x)$ be the compositional inverse of $g(x)$ over $F$. Then the
compositional inverse of $f(x)$ is given by
$$f^{-1}(x) = g^{-1}\left(x - g_0(g^{-1}(\bar{\lambda}(x)))\right).$$\end{coro}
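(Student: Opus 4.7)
The plan is to realize this corollary as a direct specialization of Theorem \ref{Thm2}, taking $f_1 = g$ and $h = g_0$. The only subtlety is reconciling the two meanings of the letter ``$g$'': in Lemma \ref{le15} it denotes a permutation of $F$, while in Theorem \ref{Thm2} it denotes a bijection $S \to \bar{S}$. So first I would identify the $g$ of Theorem \ref{Thm2} with the restriction $g|_S : S \to \bar{S}$. The hypothesis $g(S)=\bar{S}$ together with $g$ being a permutation of $F$ makes $g|_S$ a bijection of the right type. Furthermore, since $g$ permutes $F$, we have $g^{-1}(\bar{S}) = g^{-1}(g(S)) = S$, so $(g|_S)^{-1}$ coincides with the restriction $g^{-1}|_{\bar{S}}$.

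Next I would verify the commutativity $\bar{\lambda}\circ f = (g|_S)\circ \lambda$ required by Theorem \ref{Thm2}. Using additivity of $\bar{\lambda}$,
$$\bar{\lambda}(f(x)) \;=\; \bar{\lambda}(g(x)) + \bar{\lambda}(g_0(\lambda(x))),$$
and the second term vanishes by the hypothesis $\bar{\lambda}(g_0(\lambda(x)))=0$. The identity $\bar{\lambda}\circ(g+g_0\circ\lambda)=g\circ\lambda$ of Lemma \ref{le15} then gives $\bar{\lambda}(f(x))=g(\lambda(x))$, as needed. Since $\lambda(x)\in S$, the right-hand side equals $(g|_S)(\lambda(x))$. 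The remaining hypotheses of Theorem \ref{Thm2}, namely surjectivity of $\lambda,\bar{\lambda}$ and $\sharp S=\sharp \bar{S}$, are part of the setup of Lemma \ref{le15}.

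With these identifications in place, I would plug into the conclusion of Theorem \ref{Thm2}. Writing $f_1 = g$ and $h = g_0$, the theorem gives
$$f^{-1}(x) \;=\; f_1^{-1}\!\left(x - h\bigl((g|_S)^{-1}(\bar{\lambda}(x))\bigr)\right),$$
and substituting $f_1^{-1} = g^{-1}$, $h=g_0$, and $(g|_S)^{-1}(y)=g^{-1}(y)$ for $y\in \bar{S}$ yields
$$f^{-1}(x) \;=\; g^{-1}\!\left(x - g_0\bigl(g^{-1}(\bar{\lambda}(x))\bigr)\right),$$
which is the desired formula.

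The proof is essentially bookkeeping: no new argument is needed beyond Theorem \ref{Thm2}. The one place where care is required, and hence what I would regard as the main obstacle, is checking that the restriction $g|_S$ is a well-defined bijection $S\to \bar{S}$ whose inverse is the restriction of the global inverse $g^{-1}$; this is what legitimizes writing $g^{-1}(\bar{\lambda}(x))$ (with $g^{-1}$ the inverse of $g$ on $F$) in place of the inverse of the map appearing in the dual diagram of Theorem \ref{M}.
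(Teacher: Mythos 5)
Your proposal is correct and follows exactly the paper's route: the paper states this corollary without a written proof, simply noting that it is Theorem \ref{Thm2} applied with $f_1(x)=g(x)$ (and $h=g_0$), which is precisely your specialization. The extra bookkeeping you supply --- checking that the diagram's bijection is $g|_S:S\to\bar{S}$ and that its inverse is the restriction of the global $g^{-1}$ --- is a correct and worthwhile filling-in of details the paper leaves implicit.
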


 For $S\subseteq \mathbb{F}_q$, $\gamma, b\in\mathbb{F}_q$ and a map $\lambda:\mathbb{F}_q\to\mathbb{F}_q$, $\gamma$ is called
a $b$-linear translator \cite{AGW11} of $\lambda$ with respect to $S$ if
$\lambda(x + u\gamma) = \lambda(x) + ub$ for all $x\in\mathbb{F}_q$ and $u\in S$.

\begin{lem}\label{le26} (\cite[Theorem 6.4]{AGW11}): Let $S\subseteq \mathbb{F}_q$ and $\lambda:\mathbb{F}_q\to S$
be a surjective map. Let $\gamma\in \mathbb{F}_q^\ast$ be a $b$-linear translator with
respect to $S$ for the map $\lambda$. Then for any $G\in\mathbb{F}_q[x]$ which
maps $S$ into $S$, we have that $f(x) = x + \gamma G(\lambda(x))$ is a PP
of $\mathbb{F}_q$ if and only if $g(x) = x + bG(x)$ permutes $S$.\end{lem}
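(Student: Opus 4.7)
The plan is to apply the AGW criterion (Lemma \ref{lem-1.1}) directly with $A=\mathbb{F}_q$, the given sets $S$, $\bar{S}=S$, the given surjection $\lambda$, and the choices $\bar{\lambda}=\lambda$ and $h=g$. The first task is to verify that the square
$$\xymatrix{
  \mathbb{F}_q \ar[d]_{\lambda} \ar[r]^{f} & \mathbb{F}_q \ar[d]^{\lambda} \\
  S \ar[r]_{g} & S   }$$
commutes. For any $x\in\mathbb{F}_q$, set $u=G(\lambda(x))$; since $G$ maps $S$ into $S$ and $\lambda(x)\in S$, we have $u\in S$, so the $b$-linear translator hypothesis applies and yields $\lambda(x+u\gamma)=\lambda(x)+ub$. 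That is,
$$\lambda(f(x))=\lambda\bigl(x+\gamma G(\lambda(x))\bigr)=\lambda(x)+bG(\lambda(x))=g(\lambda(x)),$$
so $\bar{\lambda}\circ f=g\circ\lambda$ as required.

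Next I would check that $f$ is injective on each fiber $\lambda^{-1}(s)$. For $x,y\in\lambda^{-1}(s)$, $G(\lambda(x))=G(s)=G(\lambda(y))$, hence
$$f(x)-f(y)=(x-y)+\gamma\bigl(G(s)-G(s)\bigr)=x-y,$$
so $f(x)=f(y)$ forces $x=y$. This fiberwise injectivity is the free step.

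With the commutative square in place and fiberwise injectivity verified, Lemma \ref{lem-1.1} tells us that $f$ permutes $\mathbb{F}_q$ if and only if $h=g$ is a bijection from $S$ to $\bar{S}=S$, i.e.\ $g$ permutes $S$. This gives the desired equivalence. The main subtlety, if any, is just ensuring that the translator identity $\lambda(x+u\gamma)=\lambda(x)+ub$ is being invoked with a legitimate $u\in S$, which is exactly why the assumption that $G$ maps $S$ into $S$ is needed; once that is acknowledged the proof is essentially routine bookkeeping on top of Lemma \ref{lem-1.1}.
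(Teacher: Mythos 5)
Your proof is correct. The paper itself gives no proof of this lemma---it is quoted verbatim from \cite[Theorem 6.4]{AGW11}---and your argument is exactly the standard application of the AGW criterion (Lemma \ref{lem-1.1}) with $\bar{S}=S$, $\bar{\lambda}=\lambda$, $h=g$: the translator identity gives commutativity of the square (and, via surjectivity of $\lambda$, that $g$ indeed maps $S$ into $S$), and injectivity on each fiber is immediate since $G(\lambda(\cdot))$ is constant there.
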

Applying Theorem \ref{Thm2} to $f_1(x)=x$, we get the following corollary, which improves Theorem 27 in  \cite{NLQW21}.
\begin{coro}Let $f(x) = x + \gamma G(\lambda(x))$ defined as in
Lemma \ref{le26} be a PP on $\mathbb{F}_q$ and $g^{-1}(x)$ be the compositional
inverse of $g(x) = x + bG(x)$. Then the compositional inverse
of $f(x)$ is given by
$$f^{-1}(x)=x-\gamma G(g^{-1}(\lambda(x))).$$\end{coro}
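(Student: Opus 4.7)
The plan is to obtain this corollary as a direct specialization of Theorem \ref{Thm2}. I would take $f_1(x) = x$ and $h(y) = \gamma G(y)$, so that $f(x) = f_1(x) + h(\lambda(x))$ has the additive form required by Theorem \ref{Thm2}. The targets $\bar{S} = S$ and $\bar{\lambda} = \lambda$ are the natural choice, and the role of the intermediate bijection ``$g$'' in Theorem \ref{Thm2} is played by the map $g(x) = x + bG(x)$ from Lemma \ref{le26}, which permutes $S$ by hypothesis.

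The one verification needed before applying Theorem \ref{Thm2} is the commutativity $\lambda \circ f = g \circ \lambda$. Fix $x \in \mathbb{F}_q$ and set $u := G(\lambda(x))$; since $G$ sends $S$ into $S$, we have $u \in S$. The defining property of the $b$-linear translator $\gamma$ then gives
$$\lambda(f(x)) = \lambda(x + \gamma u) = \lambda(x) + b u = \lambda(x) + b G(\lambda(x)) = g(\lambda(x)),$$
which is the required commutativity. Both $\lambda$ and $\bar{\lambda} = \lambda$ are surjective onto $S$ by the hypothesis of Lemma \ref{le26}, so the diagram of Theorem \ref{Thm2} is in place.

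With this in hand, Theorem \ref{Thm2} directly yields
$$f^{-1}(x) = f_1^{-1}\bigl(x - h(g^{-1}(\bar{\lambda}(x)))\bigr) = x - \gamma G\bigl(g^{-1}(\lambda(x))\bigr),$$
after substituting $f_1^{-1}(y) = y$, $h(y) = \gamma G(y)$, and $\bar{\lambda} = \lambda$. This is exactly the claimed formula.

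The main (and essentially only) obstacle is the commutativity check above; it is short but forces one to unpack both the definition of a $b$-linear translator and the stability hypothesis $G(S) \subseteq S$. Every other step is a formal substitution into the formula supplied by Theorem \ref{Thm2}, which is why this corollary is an immediate consequence and simultaneously strengthens the corresponding result of \cite{NLQW21}.
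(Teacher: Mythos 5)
Your proposal is correct and follows exactly the route the paper takes: the paper derives this corollary by applying Theorem \ref{Thm2} with $f_1(x)=x$, which is precisely your specialization. Your explicit verification of the commutativity $\lambda\circ f=g\circ\lambda$ via the $b$-linear translator property is a detail the paper leaves implicit, but it is the right check and does not change the argument.
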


We end this section with another AGW-PP over finite fields, which is the corrected version of Theorem 3.13 in \cite{YD14}.

\begin{lem}\label{yd14}Let $n$ be a positive integer, $f(x)\in\mathbb{F}_q[x]$ a linearized polynomial such that $\gcd(l(x), x^n-1)\ne1$, where $l(x)$ is the associated polynomial of $L(x)$. Let $a\in\mathbb{F}_{q^n}^\ast$ be a solution of $L(x)=0$ and $h(x)$ is a polynomial with $h(x)^q=h(x)$. Let $L_1(x)\in\mathbb{F}_q[x]$ be a linearized polynomial. Then for every $\delta\in\mathbb{F}_{q^n}$, the polynomial $$f(x)=ah(L(x)+\delta)+L_1(x)$$
permutes $\mathbb{F}_{q^n}$ if and only if $L_1(x)$ permutes $\mathbb{F}_{q^n}$. Moreover, if $L_1(x)$ permutes $\mathbb{F}_{q^n}$, then
$$f^{-1}(x)=L_1^{-1}\left(x-ah(L_1^{-1}(L(x))+\delta)\right).$$\end{lem}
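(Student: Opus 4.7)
The plan is to cast this as an instance of Theorem~\ref{Thm2} in the additive AGW setup, with $\lambda=\bar\lambda=L$, $S=\bar S=L(\mathbb{F}_{q^n})$, $f_1=L_1$, and an appropriately chosen $g:S\to S$. The key preliminary step is to verify the commutation relation $L\circ f=L_1\circ L$. This rests on three observations: since $L$ is $\mathbb{F}_q$-linear, $a\in\ker L$, and each value $h(L(x)+\delta)$ lies in $\mathbb{F}_q$ (by the assumption $h(x)^q=h(x)$), we have $L(ah(L(x)+\delta))=h(L(x)+\delta)L(a)=0$; and since $L$ and $L_1$ are both $\mathbb{F}_q$-linearized polynomials, their associated $q$-polynomials lie in the commutative ring $\mathbb{F}_q[x]$, so $L\circ L_1=L_1\circ L$. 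Consequently $L(f(x))=L(L_1(x))=L_1(L(x))$, and $g:=L_1|_S$ is a well-defined self-map of $S$ making the AGW diagram commute.

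For the equivalence, suppose first that $L_1$ permutes $\mathbb{F}_{q^n}$. Then $\ker L_1=\{0\}$ forces both $\ker L\cap\ker L_1=\{0\}$ and
\[
L_1(S)=L_1(L(\mathbb{F}_{q^n}))=L(L_1(\mathbb{F}_{q^n}))=L(\mathbb{F}_{q^n})=S,
\]
so $g$ is bijective on $S$. Moreover on each fibre $L^{-1}(s)$ the term $ah(L(x)+\delta)$ is constant, so $f$ restricted to that fibre is an additive translate of $L_1$ and hence injective. The AGW criterion (Lemma~\ref{lem-1.1}) then yields that $f$ is a permutation.

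Conversely, if $f$ is a permutation, the AGW criterion forces $g=L_1|_S$ to be bijective (so $\ker L_1\cap S=\{0\}$) and $f$ injective on every fibre (so $\ker L\cap\ker L_1=\{0\}$). The main obstacle is promoting the bijectivity of $L_1|_S$ to bijectivity of $L_1$ on all of $\mathbb{F}_{q^n}$. I plan to do this by a rank argument: since $f(\mathbb{F}_{q^n})=\mathbb{F}_{q^n}$ and $f(\mathbb{F}_{q^n})\subseteq\mathbb{F}_q\cdot a+L_1(\mathbb{F}_{q^n})$, we get $\dim\ker L_1\leq 1$. If a nonzero $b\in\ker L_1$ existed, commutativity would give $L_1(L(b))=L(L_1(b))=0$, so $L(b)\in\ker L_1=\mathbb{F}_q b$, say $L(b)=cb$ with $c\in\mathbb{F}_q$; the case $c=0$ contradicts $\ker L\cap\ker L_1=\{0\}$, while $c\neq 0$ forces $b=c^{-1}L(b)\in S$, contradicting $\ker L_1\cap S=\{0\}$. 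Hence $\ker L_1=\{0\}$ and $L_1$ permutes $\mathbb{F}_{q^n}$.

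Finally, the inverse formula follows directly from Theorem~\ref{Thm2}: with $f_1=L_1$ and additive term $H(s):=ah(s+\delta)$ for $s\in S$, the theorem yields
\[
f^{-1}(x)=L_1^{-1}\bigl(x-H(g^{-1}(L(x)))\bigr)=L_1^{-1}\bigl(x-ah(L_1^{-1}(L(x))+\delta)\bigr),
\]
which is exactly the stated expression. I expect the dimension-plus-commutativity step in the converse direction of the iff to be the most delicate piece; the rest is essentially bookkeeping within the AGW framework.
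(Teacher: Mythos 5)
Your proof is correct, and while the overall skeleton (an AGW commutative diagram followed by an application of Theorem~\ref{Thm2} to read off the inverse) is the same as the paper's, you handle the equivalence ``$f$ permutes $\mathbb{F}_{q^n}$ iff $L_1$ does'' quite differently. The paper sets $\lambda(x)=L(x)+\delta$, $\bar\lambda(x)=L(x)$, uses $L_1(x-\delta)$ as the bottom map, and then simply cites \cite[Theorem 6.1]{YD11} (Lemma~\ref{le15}) to get the equivalence in one line; you instead take $\lambda=\bar\lambda=L$, absorb $\delta$ into the branch term $H(s)=ah(s+\delta)$, and prove both directions from the raw AGW criterion. The genuinely new content in your version is the converse: the AGW criterion only yields that $L_1|_S$ is a bijection of $S=L(\mathbb{F}_{q^n})$, and you promote this to bijectivity of $L_1$ on all of $\mathbb{F}_{q^n}$ via the inclusion $f(\mathbb{F}_{q^n})\subseteq\mathbb{F}_q a+L_1(\mathbb{F}_{q^n})$ (forcing $\dim\ker L_1\le 1$), the commutativity $L\circ L_1=L_1\circ L$ of $\mathbb{F}_q$-linearized polynomials, and the two exclusions $\ker L\cap\ker L_1=\{0\}$ and $\ker L_1\cap S=\{0\}$. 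This argument is sound and is arguably worth having: the paper's appeal to Lemma~\ref{le15} tacitly requires the hypothesis $g(S)=\bar S$ of that lemma, which is not automatic before one knows $L_1$ is a permutation, so your self-contained treatment closes a small gap rather than merely duplicating the citation. The derivation of the inverse formula, including the identification $g^{-1}=L_1^{-1}|_S$, matches the paper's use of Theorem~\ref{Thm2} and the dual diagram of Theorem~\ref{M}.
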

\begin{proof}By assumption, we have $h(L(x)+\delta)\in\mathbb{F}_q, x\in\mathbb{F}_{q^n}$ since $h^q(x)=h(x)$, hence $L(ah(L(x)+\delta))=h(L(x)+\delta)L(a)=0$. It follows that the following diagram commutes
$$\xymatrix{
  \mathbb{F}_{q^n} \ar[d]_{L(x)+\delta} \ar[r]^{f}
                & \mathbb{F}_{q^n} \ar[d]^{L(x)}  \\
  S  \ar[r]_{L_1(x-\delta)}
                & \bar{S}            }
$$
where $S=\{L(x)+\delta, x\in\mathbb{F}_{q^n}\}$ and $\bar{S}=\{L(x), x\in\mathbb{F}_{q^n}\}$. By Lemma \ref{le15}, $f(x)$ permutes $\mathbb{F}_{q^n}$ if and only if $L_1(x)$ permutes $\mathbb{F}_{q^n}$. Further,
if $L_1(x)$ permutes $\mathbb{F}_{q^n}$, by Theorem \ref{Thm2} and the following diagram
\begin{center}
 \quad
\xymatrix{
  \mathbb{F}_{q^n} \ar[d]_{L(x)+\delta} \ar[r]^{f} & \mathbb{F}_{q^n} \ar[d]_{L(x)} \ar[r]^{f^{-1}} & \mathbb{F}_{q^n} \ar[d]^{L(x)+\delta} \\
  S \ar[r]^{L_1(x-\delta)} & \bar{S} \ar[r]^{L_1^{-1}(x)+\delta} & S   }\end{center}
we get
$$f^{-1}(x)=L_1^{-1}\left(x-ah(L_1^{-1}(L(x))+\delta)\right).$$
\end{proof}

\section{Branch PPs and their compositional inverses}

For a permutation polynomial $f$ by AGW criterion, we have
$$\xymatrix{
  A \ar[d]_{\lambda} \ar[r]^{f}
                & A \ar[d]^{\bar{\lambda}}  \\
  S  \ar[r]_{h}
                & \bar{S}            }
$$
As $\lambda$ is a surjective map, we have that $A$ is a disjointed union of $\lambda^{-1}(s_i), s_i\in S$, i.e.
$$A=\uplus_{s\in S}\lambda^{-1}(s).$$
Let $|S|=m, S=\{s_1, \ldots, s_m\}$,
$$A_i= \lambda^{-1}(s_i),\quad B_i=f\left(\lambda^{-1}(s_i)\right), \,\, i=1, \ldots, m,$$
$$f|_{A_i}=f_i: A_i\to f(A_i)=B_i, \,\, i=1, \ldots, m.$$
Then $f: A\to A$ can be viewed as a branch function of $f_i(x),\,\, i=1, \ldots, m$. Let $f_i^{-1}: B_i\to A_i, \,\, i=1, \ldots, m,$ be the local inverse of $f_i, \,\, i=1, \ldots, m$, i.e. $f_i^{-1}(f_i(x))=x, x\in A_i, \,\, i=1, \ldots, m$. For any non-empty subset $T$ of $S$, let $\chi_T(x)$ be the characteristic function on $T$, i.e.
\begin{eqnarray*}
\chi_T(x)= \left\{ \begin{array}{ll}
                     1, & x\in T \\
                     0, & \mbox{ otherwise.}
                    \end{array}
         \right.
\end{eqnarray*}

Then we have

\begin{thm}\label{Thmb}Let $A$ be a non-empty subset of $\mathbb{F}_q$, and $f$ is a branch bijection defined as above, then we have
$$f^{-1}(x)=\sum_{i=1}^m\chi_{B_i}(x)f_i^{-1}(x).$$\end{thm}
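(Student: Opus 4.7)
The plan is to leverage the fact that $f$ being a bijection forces the image sets $B_i=f(A_i)$ to partition $A$, which collapses the sum on the right-hand side to a single term for any given argument $x$.

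First, I would invoke Lemma \ref{le2.2}: since $f:A\to A$ is a bijection and $A=\uplus_{i=1}^m A_i$ with $A_i=\lambda^{-1}(s_i)$, the sets $B_i=f(A_i)$ are pairwise disjoint (condition (ii) of that lemma), and their union equals $f(A)=A$. Thus $A=\uplus_{i=1}^m B_i$, and for every $x\in A$ there is a unique index $j=j(x)\in\{1,\dots,m\}$ with $x\in B_j$.

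Second, by the definition of the characteristic function, the only nonzero term in the sum is the one for $i=j$, so
$$\sum_{i=1}^m \chi_{B_i}(x)\,f_i^{-1}(x)=f_j^{-1}(x).$$
Since $f_j^{-1}$ is the local inverse of $f_j=f|_{A_j}$, the value $f_j^{-1}(x)$ lies in $A_j\subseteq A$ and satisfies
$$f\bigl(f_j^{-1}(x)\bigr)=f_j\bigl(f_j^{-1}(x)\bigr)=x.$$
Applying the global compositional inverse $f^{-1}$ to both sides yields $f_j^{-1}(x)=f^{-1}(x)$, which is precisely the claimed identity.

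There is no genuine obstacle here; the statement is essentially a bookkeeping identity once the partition $A=\uplus_{i=1}^m B_i$ is in hand, with Lemma \ref{le2.2} doing all the real work. The only point worth making explicit is that the equality is one of functions on $A$; if a polynomial representation over $\mathbb{F}_q$ is desired, each $\chi_{B_i}$ can be realized via the standard interpolation $\chi_{B_i}(x)=\sum_{b\in B_i}\bigl(1-(x-b)^{q-1}\bigr)$, turning Theorem \ref{Thmb} into an explicit polynomial formula for $f^{-1}$.
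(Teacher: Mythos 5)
Your proof is correct and essentially the same as the paper's: both arguments reduce to the observation that the sets $B_i=f(A_i)$ partition $A$, so that at any given point only one characteristic function is nonzero and the sum collapses to the single local inverse. The only cosmetic difference is direction: you verify $f\circ\bigl(\sum_i\chi_{B_i}f_i^{-1}\bigr)=\mathrm{id}$ by evaluating at $x\in B_j$, while the paper verifies $\bigl(\sum_i\chi_{B_i}f_i^{-1}\bigr)\circ f=\mathrm{id}$ by evaluating at $x\in A_i$; your explicit appeal to Lemma \ref{le2.2} for the disjointness of the $B_i$ is a point the paper leaves implicit.
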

\begin{proof} For $x\in A_i, i=1, \ldots, m$, we have $f(x)=f_i(x)\in B_i$, hence
$$(\sum_{i=1}^m\chi_{B_i}f_i^{-1})\circ f(x)=\sum_{i=1}^m\chi_{B_i}(f(x))f_i^{-1}(f(x))=f_i^{-1}(f_i(x))=x,$$ which implies that $$f^{-1}(x)=\sum_{i=1}^m\chi_{B_i}(x)f_i^{-1}(x).$$ This completes the proof.\end{proof}

For any branch bijection, it is not easy to give the characteristic functions $\chi_{B_i}(x)$. However, it is easy for the branch functions when we use the cyclotomic cosets as branches.

Let $\gamma$ be a fixed primitive element of $\mathbb{F}_q$, $s|q-1$. The integer $\ell=\frac{q-1}{s}$ is called the index of $f(x)=x^rh(x^s)$. Let $C_0$ be the set of all non-zero $\ell$-th powers, i.e. $C_0=<\gamma^\ell>$. $C_0$ is a subgroup of $\mathbb{F}_q^\ast$ of index $\ell$. The cosets of $C_0$ are the cyclotomic cosets
$$C_i: =\gamma^iC_0, \quad i=0, 1, \ldots, \ell-1. $$
Let $\mu_\ell$ denote the set of $\ell$-th roots of unity in $\mathbb{F}_q^\ast$, i.e.
$$\mu_\ell=\{x\in\mathbb{F}_q^\ast|x^\ell=1\}.$$
We have
\begin{lem}Let $\gamma$ be a fixed primitive element of $\mathbb{F}_q$, $s|q-1$ and $\ell=\frac{q-1}{s}$. Let $C_0=<\gamma^\ell>$, and
$$C_i: =\gamma^iC_0, \quad i=0, 1, \ldots, \ell-1. $$
Let $h_i(x)=1+\left(\frac{x}{\gamma^i}\right)^s+\cdots+\left(\frac{x}{\gamma^i}\right)^{(\ell-1)s}, i=0, 1, \ldots, \ell-1$. Then we have
\begin{eqnarray*}
\chi_{C_i}(x)=\frac{h_i(x)}{\ell} =\left\{ \begin{array}{ll}
                     1, & x\in C_i \\
                     0, & \mbox{ otherwise}
                    \end{array}
         \right.
\end{eqnarray*}
for $i=0, 1, \ldots, \ell-1$.\end{lem}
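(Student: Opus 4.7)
The plan is to recognize $h_i(x)$ as essentially a geometric sum in powers of a primitive $\ell$-th root of unity and to compute it by cases according to which coset $C_j$ contains $x$. Two preliminary observations set this up. First, $\zeta := \gamma^s$ is a primitive $\ell$-th root of unity: since $\gamma$ has order $q-1 = \ell s$ in $\mathbb{F}_q^\ast$, the order of $\gamma^s$ is $(q-1)/\gcd(q-1,s) = \ell$. Second, $\ell$ is invertible in $\mathbb{F}_q$, because $\ell \mid q-1$ forces $\gcd(\ell,p)=1$, so the expression $h_i(x)/\ell$ is meaningful. (We also restrict to $x \in \mathbb{F}_q^\ast$, which is the natural setting here since the $C_j$ partition $\mathbb{F}_q^\ast$.)

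For the main computation, I would fix $x \in \mathbb{F}_q^\ast$ and the unique $j \in \{0,1,\ldots,\ell-1\}$ with $x \in C_j$, writing $x = \gamma^j c$ for some $c \in C_0 = \langle \gamma^\ell \rangle$. Since $\gamma^\ell$ has order $s$, every $c \in C_0$ satisfies $c^s = 1$, so
\[
\left(\frac{x}{\gamma^i}\right)^s = \gamma^{(j-i)s}\, c^s = \zeta^{j-i}.
\]
Substituting into the definition of $h_i(x)$ immediately gives
\[
h_i(x) = \sum_{k=0}^{\ell-1} \zeta^{k(j-i)}.
\]

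From here the argument splits into two cases. If $x \in C_i$, i.e.\ $j \equiv i \pmod{\ell}$, then $\zeta^{j-i} = 1$ and every summand is $1$, so $h_i(x) = \ell$ and $h_i(x)/\ell = 1$. If $x \notin C_i$, then $\zeta^{j-i} \neq 1$ (this is exactly where primitivity of $\zeta$ is used), so the geometric series formula yields
\[
h_i(x) = \frac{\zeta^{\ell(j-i)} - 1}{\zeta^{j-i} - 1} = 0,
\]
using $\zeta^\ell = 1$. Combining the two cases gives $h_i(x)/\ell = \chi_{C_i}(x)$ on $\mathbb{F}_q^\ast$, as required.

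There is no real obstacle beyond bookkeeping: the entire argument is the classical orthogonality of characters of the cyclic group $\mathbb{F}_q^\ast/C_0 \cong \mathbb{Z}/\ell\mathbb{Z}$, phrased concretely. The only points worth being explicit about are the primitivity of $\zeta = \gamma^s$ (needed for the geometric series to collapse to zero off $C_i$) and the invertibility of $\ell$ in $\mathbb{F}_q$ (needed to form $h_i(x)/\ell$); both follow from $\ell \mid q-1$.
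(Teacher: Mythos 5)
Your proof is correct and follows the same route as the paper: the paper simply cites the well-known fact that $1+g^s+\cdots+g^{(\ell-1)s}$ equals $\ell$ for $g\in C_0$ and $0$ otherwise, while you supply the full derivation of that fact via the geometric series in $\zeta=\gamma^s$. Your explicit remarks on the primitivity of $\zeta$, the invertibility of $\ell$, and the restriction to $\mathbb{F}_q^\ast$ (where the formula actually holds, since $h_i(0)=1$) are welcome details the paper leaves implicit.
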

\begin{proof} The result follows from the following well-known result that
\begin{eqnarray*}1+g^s+\cdots+g^{(\ell-1)s}= \left\{ \begin{array}{ll}
                    \ell,  & g\in C_0 \\
                     0, & \mbox{ otherwise.}
                    \end{array}
         \right.
\end{eqnarray*} \end{proof}

 Now we will the branch function to give two proofs of Theorem 11 in \cite{NLQW21}.

\begin{prop}{\rm (\cite[Theorem 11]{NLQW21})} Let $f(x)=x^rh(x^s)\in\mathbb{F}_q[x]$ defined in Lemma \ref{le10} be a permutation over $\mathbb{F}_q$ and $g^{-1}(x)$ be the compositional inverse of $g(x)=x^rh(x)^s$ over $\mu_{\ell}$. Suppose $a$ and $b$ are two integers satisfying $as+br=1$. Then the compositional inverse of $f(x)$ in $\mathbb{F}_q[x]$ is given by
$$f^{-1}(x)=g^{-1}(x^s)^ax^bh(g^{-1}(x^s))^{-b}.$$
\end{prop}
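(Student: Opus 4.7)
My plan is to give two independent arguments, both built on the dual diagram of Theorem~\ref{M} applied with $A=\mathbb{F}_q^\ast$, $S=\bar S=\mu_\ell$, and $\lambda(x)=\bar\lambda(x)=x^s$; the identity $(x^rh(x^s))^s=(x^s)^rh(x^s)^s=g(x^s)$ makes this diagram commute, so Theorem~\ref{M} applies.

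\emph{First route (dual diagram).} Writing $y:=f^{-1}(x)$, Theorem~\ref{M} immediately gives $y^s=g^{-1}(x^s)$. Feeding $y$ back into $f(y)=x$ yields the second partial identity $y^r=x/h(y^s)=x/h(g^{-1}(x^s))$. The key observation is that $\gcd(r,s)=1$ (Lemma~\ref{le10}), so the B\'ezout relation $as+br=1$ recovers $y$ from the pair $(y^s,y^r)$ via $y=(y^s)^a(y^r)^b$. Substituting the two formulas for $y^s$ and $y^r$ produces exactly the stated expression.

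\emph{Second route (branch functions).} I partition $\mathbb{F}_q^\ast=\uplus_{i=0}^{\ell-1}C_i$ into the cyclotomic cosets, which are precisely the fibers $\lambda^{-1}(\gamma^{is})$. On $C_i$ the map $f$ restricts to $f_i(x)=h(\gamma^{is})\,x^r$, a monomial times a constant. Given $y\in B_i:=f(C_i)$, I invert $f_i$ by combining the known values $x^s=\gamma^{is}$ and $x^r=y/h(\gamma^{is})$ through $x=(x^s)^a(x^r)^b$. Because $y\in B_i$ forces $y^s=g(\gamma^{is})$, one has $\gamma^{is}=g^{-1}(y^s)$ and hence $h(\gamma^{is})=h(g^{-1}(y^s))$, so the local inverse formula
\[
f_i^{-1}(y)=g^{-1}(y^s)^a\,y^b\,h(g^{-1}(y^s))^{-b}
\]
is \emph{the same} on every branch. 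Theorem~\ref{Thmb} together with $\sum_{i}\chi_{B_i}(y)=1$ on $\mathbb{F}_q^\ast$ then collapses the sum to this single closed form, and the trivial case $y=0$ gives $f^{-1}(0)=0$ on both sides.

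\emph{Main obstacle.} The nontrivial point is conceptual rather than computational: one cannot invert $x^r$ directly because we do not assume $\gcd(r,q-1)=1$, so the approach of Theorem~\ref{Thm1} with $f_1(x)=x^r$ is unavailable. The correct move is to extract the $s$-th and $r$-th power information of $y$ separately from the diagram and then to glue them using $as+br=1$. A small side check is that $h(g^{-1}(y^s))\neq0$, which is automatic because $g(x)=x^rh(x)^s$ permuting $\mu_\ell$ rules out any zero of $h$ inside $\mu_\ell$.
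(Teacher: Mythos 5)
Your proposal is correct and matches the paper, which also gives two proofs: your second route is precisely the paper's first (branch-function) proof, and your first route is the paper's second (dual-diagram) proof in a slightly streamlined form. The only genuine difference is that the paper's second proof starts from the ansatz $f^{-1}(x)=x^bh'(x^s)$ (imported from its first proof) and then solves for $h'$, whereas your direct gluing $y=(y^s)^a(y^r)^b$ with $y^s=g^{-1}(x^s)$ from Theorem~\ref{M} and $y^r=x/h(g^{-1}(x^s))$ from $f(y)=x$ makes the dual-diagram argument self-contained; your side remarks on $h$ having no zero on $\mu_\ell$ and on the point $x=0$ are also correct and are left implicit in the paper.
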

\begin{proof} {\bf The first proof:} View $f(x)$ as a branch function of $f_i(x)=x^rh(\gamma^{is}), x\in C_i=\gamma^i<\gamma^{\ell}>, \,\, i=0, \ldots, \ell-1$. Since $h(x^s)=h(\gamma^{is})$ is a constant for any $x\in C_i$, it is easy to check that $f^{-1}_i(x)=\gamma^{ais}h(\gamma^{is})^{-b}x^b, i=0, \ldots, \ell-1$ satisfy $f_i^{-1}(f_i(x))=x, x\in C_i, i=0, \ldots, \ell-1$. Here $f_i: C_i\to C_{t(i)}, f(C_i)=C_{t(i)}$. Since $f(\gamma^{i+dt})^s=g(\gamma^{(i+dt)s})=g(\gamma^{is})$, we have
$$f(C_i)^s=g(C_i^s)=C_{t(i)}^s, \quad g(\gamma^{is})=\gamma^{t(i)s},$$
and
$$f_i^{-1}: C_{t(i)}\to C_i, \quad x\mapsto \gamma^{ias}h(\gamma^{is})^{-b}x^b, x\in C_{t(i)}, \, i=0, \ldots, \ell-1.$$
Hence $f^{-1}(x)=x^bh'(x^s)$, where $h'(\gamma^{t(i)s})=\gamma^{ias}h(\gamma^{is})^{-b}$. Since $g$ is a bijection, we have $g^{-1}(\gamma^{t(i)s})=\gamma^{is}$ and
$\gamma^{ias}h(\gamma^{is})^{-b}=(g^{-1}(\gamma^{t(i)s}))^a(h(g^{-1}(\gamma^{t(i)s})))^b$, which implies that
$$h'(x^s)=(g^{-1}(x^s))^a(h(g^{-1}(x^s)))^{-b}.$$
Hence $f^{-1}(x)=g^{-1}(x^s)^ax^bh(g^{-1}(x^s))^{-b}$, and we are done.\end{proof}

{\bf Second proof:} From the first proof we have $f^{-1}(x)=x^bh'(x^s)$, so it suffices to prove that $h'(x^s)=g^{-1}(x^s)^ax^bh(g^{-1}(x^s))^{-b}$. We use the following commutative diagram to prove this.
$$\xymatrix{
  \mathbb{F}_q^\ast \ar[d]_{x^s} \ar[r]^{f^{-1}}
                & \mathbb{F}_q^\ast  \ar[d]^{x^s}  \\
  \mu_{\ell}  \ar[r]_{g^{-1}}
                & \mu_{\ell}            }
$$
By the diagram, we have
\begin{equation}\label{eq1}g^{-1}(x^s)=x^{sb}h'(x^s)^s, \quad sa+rb=1.\end{equation}
Since $f(f^{-1}(x))=x$, we obtain
$$(x^bh'(x^s))^rh(x^{sb}h'(x^s)^s)=x, $$
i.e. $(x^bh'(x^s))^rh(g^{-1}(x^s))=x$, hence
$$x^b=(x^bh'(x^s))^{rb}h(g^{-1}(x^s))^b.$$
By (\ref{eq1}), we have
$$(g^{-1}(x^s))^a=x^{sba}h'(x^s)^{sa}=x^bh'(x^s)\left(x^bh'(x^s)\right)^{-rb}=h'(x^s)h(g^{-1}(x^s))^b.$$
It follows that $h'(x^s)=(g^{-1}(x^s))^a(h(g^{-1}(x^s)))^{-b}$, and we are done.

We end the paper with two results on branch functions.

\begin{lem}\label{B2}Let $q$ be an odd prime power, $\gamma$  a fixed primitive element of $\mathbb{F}_q$, $a_1, a_2\in \mathbb{F}_q^\ast$. Let $r_1$ and $ r_2$ be two positive integers  and let
$$f(x)=\frac{a_1}{2}x^{r_1}\left(1-x^{\frac{q-1}{2}}\right)+\frac{a_2}{2}x^{r_2}\left(1+x^{\frac{q-1}{2}}\right)$$ be defined as a branch function, i.e.
 \begin{eqnarray*}
f(x)= \left\{ \begin{array}{ll}
                     0 & x=0, \\
                     a_1x^{r_1} & x\in\gamma<\gamma^2>,\\
                     a_2x^{r_2} & x\in <\gamma^2>.
                    \end{array}
         \right.
\end{eqnarray*}
Then $f(x)$ is a PP over $\mathbb{F}_q$ if and only if $\gcd(r_1r_2, \frac{q-1}{2})=1$ and $\{(-1)^{r_1}a_1^{\frac{q-1}{2}}, a_2^{\frac{q-1}{2}}\}=\{-1, 1\}$. \end{lem}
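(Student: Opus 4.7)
The plan is to read $f$ directly as a branch function on the two cyclotomic cosets of index $\ell=2$ and apply the branch-function framework together with Lemma \ref{le2.2}. Set $C_0=\langle\gamma^2\rangle$ and $C_1=\gamma\langle\gamma^2\rangle$, so that $\mathbb{F}_q^\ast=C_0\uplus C_1$, and let $\lambda:\mathbb{F}_q^\ast\to\{0,1\}$ send $x$ to $i$ whenever $x\in C_i$. Since $f(0)=0$, it is enough to show that the restriction $f|_{\mathbb{F}_q^\ast}$ is a bijection onto $\mathbb{F}_q^\ast$; by Lemma \ref{le2.2} this happens if and only if (a) $f$ is injective on each of $C_0, C_1$, and (b) $f(C_0)\cap f(C_1)=\emptyset$.

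For (a), I would observe that on $C_0$ the map is $x\mapsto a_2 x^{r_2}$; multiplication by $a_2$ is a bijection of $\mathbb{F}_q^\ast$, and $x\mapsto x^{r_2}$ is injective on the cyclic group $C_0$ of order $(q-1)/2$ precisely when $\gcd(r_2,(q-1)/2)=1$. The same argument on $C_1$ yields $\gcd(r_1,(q-1)/2)=1$. Together these give the first required condition $\gcd(r_1 r_2,(q-1)/2)=1$, and under it the power maps are bijections $C_0\to C_0$ and $C_1\to \gamma^{r_1}C_0$, so
\[
f(C_0)=a_2 C_0,\qquad f(C_1)=a_1\gamma^{r_1}C_0.
\]
Each of these sets is one of the two cosets $C_0, C_1$, depending on whether the leading constant is a square.

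For (b), I would use $\gamma^{(q-1)/2}=-1$ to translate "being a square" into the Euler criterion: $a_2\in C_0\iff a_2^{(q-1)/2}=1$, while $a_1\gamma^{r_1}\in C_0\iff (a_1\gamma^{r_1})^{(q-1)/2}=(-1)^{r_1}a_1^{(q-1)/2}=1$. Condition (b) says exactly that $f(C_0)$ and $f(C_1)$ are distinct cosets, i.e.\ one of $a_2$ and $a_1\gamma^{r_1}$ is a square and the other is a non-square. This is equivalent to $\{(-1)^{r_1}a_1^{(q-1)/2},\,a_2^{(q-1)/2}\}=\{-1,1\}$, which is the second required condition.

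The only subtlety is step (b): one must correctly interpret the image of a power map on a non-trivial coset, and convert the coset label of $a_1\gamma^{r_1}$ into a Legendre-type expression in $a_1$ alone via the identity $\gamma^{r_1(q-1)/2}=(-1)^{r_1}$. Everything else reduces to the standard cyclicity fact that $x\mapsto x^{r}$ permutes a cyclic group of order $n$ iff $\gcd(r,n)=1$, together with the branch criterion of Lemma \ref{le2.2}.
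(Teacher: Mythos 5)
Your proof is correct and follows essentially the same route as the paper: both split $\mathbb{F}_q^\ast$ into the two cyclotomic cosets $\langle\gamma^2\rangle$ and $\gamma\langle\gamma^2\rangle$, obtain $\gcd(r_1r_2,\frac{q-1}{2})=1$ from injectivity of the power maps on each coset, and reduce the remaining condition to the Euler-criterion computation $(a_1\gamma^{r_1})^{(q-1)/2}=(-1)^{r_1}a_1^{(q-1)/2}$. The only cosmetic difference is that you certify disjointness of the two image cosets directly via Lemma \ref{le2.2}, whereas the paper packages the same fact as bijectivity of the induced map $h$ on $\{-1,1\}$ in the AGW commutative diagram with $\lambda(x)=x^{(q-1)/2}$.
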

\begin{proof}Since
\begin{eqnarray*}
x^{\frac{q-1}{2}}\circ f(x)= \left\{ \begin{array}{ll}
                     a_1^{\frac{q-1}{2}}x^{\frac{r_1(q-1)}{2}} & x\in\gamma<\gamma^2>,\\
                     a_2^{\frac{q-1}{2}}x^{\frac{r_2(q-1)}{2}} & x\in <\gamma^2>,
                    \end{array}
         \right.
\end{eqnarray*}
thus we have the following commutative diagram
$$\xymatrix{
  \mathbb{F}_q^\ast \ar[d]_{x^{\frac{q-1}{2}}} \ar[r]^{f}
                & \mathbb{F}_q^\ast  \ar[d]^{x^{\frac{q-1}{2}}}  \\
  \{-1, 1\}  \ar[r]_{h}
                & \{-1, 1\}            }
$$
where $\left(h(-1), \, h(1)\right)=\left((-1)^{r_1}a_1^{\frac{q-1}{2}}, a_2^{\frac{q-1}{2}}\right)$. Observe that $a_1x^{r_1}$ permutes $\gamma<\gamma^2>$ if and only if $\gcd(r_1, \frac{q-1}{2})=1$, and $a_2x^{r_2}$ permutes $<\gamma^2>$ if and only if $\gcd(r_1, \frac{q-1}{2})=1$. Therefore by AGW criterion $f(x)$ is a PP over $\mathbb{F}_q$ if and only if $\gcd(r_1r_2, \frac{q-1}{2})=1$ and $\{(-1)^{r_1}a_1^{\frac{q-1}{2}}, a_2^{\frac{q-1}{2}}\}=\{-1, 1\}$. This completes the proof.
\end{proof}

\begin{coro}\label{TB2} Let $f(x)$ be a AGW-PP defined as in Lemma \ref{B2}. We have

(i) If $r_1$ is odd, then $f(x)$ is a PP if and only if $\gcd(r_1r_2, \frac{q-1}{2})=1$
 and $a_1a_2$ is a square.
(ii) If $r_1$ is even, then $f(x)$ is a PP if and only if $\gcd(r_1r_2, \frac{q-1}{2})=1$
 and $a_1a_2$ is  not a square.
 Moreover, the number of such PPs is $(q-1)^2\varphi^2\left(\frac{q-1}{2}\right)/2$, where $\varphi(x)$ is the Euler function.\end{coro}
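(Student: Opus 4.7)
The plan is to derive (i) and (ii) by translating the criterion of Lemma~\ref{B2} via Euler's criterion, and then to enumerate the admissible parameter 4-tuples. Since $q$ is odd, $a^{(q-1)/2}\in\{-1,1\}$ for every $a\in\mathbb{F}_q^\ast$, with value $+1$ precisely when $a$ is a nonzero square. Consequently both $(-1)^{r_1}a_1^{(q-1)/2}$ and $a_2^{(q-1)/2}$ already lie in $\{-1,1\}$, and the set equality
$$\{(-1)^{r_1}a_1^{(q-1)/2},\ a_2^{(q-1)/2}\}=\{-1,1\}$$
is equivalent to the product identity $(-1)^{r_1}(a_1a_2)^{(q-1)/2}=-1$. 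When $r_1$ is odd this reduces to $(a_1a_2)^{(q-1)/2}=1$, i.e.\ $a_1a_2$ is a square, giving (i); when $r_1$ is even it becomes $(a_1a_2)^{(q-1)/2}=-1$, i.e.\ $a_1a_2$ is a non-square, giving (ii).

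For the enumeration, I would view each PP as coming from a 4-tuple $(r_1,r_2,a_1,a_2)$ with $r_1,r_2$ ranging over the canonical set $\{1,2,\ldots,(q-1)/2\}$. Shifting $r_i$ by $(q-1)/2$ only negates $a_i$ on the relevant coset (since $x^{(q-1)/2}=\pm 1$ there) and so produces the same PP with a different labelling, so restricting to this range is the natural choice. In this range the requirement $\gcd(r_1r_2,(q-1)/2)=1$ admits $\varphi((q-1)/2)$ values of each $r_i$, hence $\varphi^2((q-1)/2)$ admissible pairs $(r_1,r_2)$. For each such pair, whichever of (i) or (ii) applies, the constraint on $a_1a_2$ cuts $(\mathbb{F}_q^\ast)^2$ exactly in half: for fixed $a_1$ the number of $a_2$ lying in the prescribed square (resp.\ non-square) class is $(q-1)/2$, yielding $(q-1)^2/2$ admissible pairs $(a_1,a_2)$. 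Multiplying gives the claimed count $(q-1)^2\varphi^2((q-1)/2)/2$.

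The only point requiring care is that distinct 4-tuples in the canonical range actually give distinct polynomials, so that the count does represent the number of such PPs; this is a brief bookkeeping argument comparing coefficients at the four monomial exponents $r_1$, $r_1+(q-1)/2$, $r_2$, $r_2+(q-1)/2$ in the expansion of $f$, noting in particular the case $r_1=r_2$ where these collapse to two exponents but still uniquely determine $(a_1,a_2)$ via the invertibility of $(a_1,a_2)\mapsto\bigl((a_1+a_2)/2,(a_2-a_1)/2\bigr)$. Beyond this, no real obstacle is expected: the corollary is a clean specialization of Lemma~\ref{B2} together with a routine counting argument.
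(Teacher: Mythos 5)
Your proposal is correct and follows essentially the same route as the paper: parts (i) and (ii) are the direct translation of the set condition in Lemma \ref{B2} via Euler's criterion, and the count is the same product $\varphi^2\bigl(\tfrac{q-1}{2}\bigr)\cdot(q-1)^2/2$, with the factor $\tfrac12$ arising from the residue-class constraint on $a_1a_2$ rather than, as in the paper, from requiring the two branch images to be complementary cosets --- but these are the same condition in disguise. Your normalization of the exponents to $\{1,\ldots,(q-1)/2\}$ is exactly the identification $r\not\equiv r'\pmod{(q-1)/2}$ that the paper uses to count distinct branch bijections, and your extra care about distinctness of the resulting polynomials is a point the paper leaves implicit.
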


 \begin{proof} By Lemma \ref{B2}, the proofs of (i) and (ii) are obvious. For the number of such PPs, it is easy to see that $a_1x^{r_1}, a_1'x^{r_1'}, 1\le r_1, r_1'\le q-1, a_1, a_1'\in\mathbb{F}_q^\ast$ are distinct bijections on $\gamma<\gamma^2>$ if and only if $\gcd(r_1r_1', q-1)=1$ and $r_1\not\equiv r_1'\pmod{(q-1)/2}$. Therefore the number of such bijections on
 $\gamma<\gamma^2>$ is $(q-1)\varphi\left(\frac{q-1}{2}\right)$. The same result holds for the bijections on $<\gamma^2>$. Hence, by (i) and (ii),  the total number of such PPs is $(q-1)^2\varphi^2\left(\frac{q-1}{2}\right)/2$.\end{proof}

 {\bf Remark:} In fact, all such PPs in Corollary \ref{TB2} form a group with the operation of  composition of functions. We denote this group as $G_2$, then $\sharp(G_2)=(q-1)^2\varphi^2\left(\frac{q-1}{2}\right)/2$, and all functions of the form $ax^r, a\in \mathbb{F}_q^\ast, \gcd(r, q-1)=1, 1\le r\le q-1$ form a subgroup of $G_2$.  Usually, $G_2$ is not an abelian group, and thus it is not easy to give all elements of order 2 (the involutions). We will discuss  this question in another paper.


\begin{thebibliography}{10}


\bibitem{AGW11}A. Akbary, D. Ghioca, and Q. Wang, On constructing permutations
of finite fields, Finite Fields Their Appl., vol. 17, no. 1, pp. 51-67,
Jan. 2011.




\bibitem{CH02} R. S. Coulter and M. Henderson, The compositional inverse of a class
of permutation polynomials over a finite field, Bull. Austral. Math. Soc.,
65(2002), 521-526.

\bibitem{DY06} C. Ding and J. Yuan, A family of skew Hadamard difference sets,
J. Comb. Theory, Ser. A 113 (2006),  1526-1535.

\bibitem{LC07} Y.
Laigle-Chapuy, Permutation polynomials and applications to coding
theory, Finite Fields Appl. 13 (2007),  58--70.


\bibitem{LQW19} K. Li, L. Qu, and Q. Wang, Compositional inverses of permutation
polynomials of the form $x^rh(x^s)$ over finite fields, Cryptogr. Commun.,
11(2019),279-298.



\bibitem{LN97} R. Lidl, H. Niederreiter,
``Finite Fields",  Cambridge University Press, Cambridge, 1997.

\bibitem{LN86} R. Lidl, H. Niederreiter,
Introduction to finite fields and their applications,  Cambridge
University Press, Cambridge, 1986.


\bibitem{Mull} G. L. Mullen, Permutation polynomials over finite fields, In: Proc.
Conf. Finite Fields and Their Applications, Lecture Notes in Pure and Applied
Mathematics, vol. 141, Marcel Dekker, 1993, 131--151.





\bibitem{NLQW21} T. Niu, K. Li, L. Qu, and Q. Wang,Finding compositional inverses of permutations from the AGW creterion, IEEE Trans. Inf. Theory, 67(2021), 4975-4985.


\bibitem{PL01} Y. H. Park, J. B. Lee, Permutation polynomials and group
permutation polynomials, Bull. Austral. Math. Soc. 63
(2001), 67--74.

\bibitem{RSA} R. L. Rivest, A. Shamir, and L. M. Adelman, A method for obtaining digital signatures and public-key cryptosystems,
Comm. ACM 21 (1978), 120--126.

\bibitem{SH}
J. Schwenk and K. Huber,
Public key encryption and digital signatures based on permutation polynomials,
Electronic Letters 34 (1998), 759--760.


\bibitem{TW14} A. Tuxanidy and Q. Wang, On the inverses of some classes of permutations of finite fields, Finite Fields Their Appl., 28(2014),244-281.
\bibitem{TW17} A. Tuxanidy and Q. Wang, Compositional inverses and complete mappings over finite fields, Discrete Appl. Math., 217(2017), 318-329.

\bibitem{W07} Q. Wang, Cyclotomic mapping permutation polynomials over finite
fields, in Sequences, Subsequences, Consequences. Berlin, Germany:
Springer, 2007, pp. 119-128.

\bibitem{W17} Q. Wang, A note on inverses of cyclotomic mapping permutation
polynomials over finite fields, Finite Fields Their Appl., 45(2017),422-427.


\bibitem{Wu14} B. Wu, The compositional inverse of a class of linearized permutation
polynomials over $\mathbb{F}_{2^n}, n$ odd, Finite Fields Their Appl., 29(2014), 34-48.

\bibitem{WL13} B. Wu and Z. Liu, The compositional inverse of a class of bilinear
permutation polynomials over finite fields of characteristic 2, Finite
Fields Their Appl., 24(2013), 136-147.
\bibitem{WL13J} B. Wu and Z. Liu, Linearized polynomials over finite fields revisited,
Finite Fields Their Appl., 22(2013), 79¨C100.

\bibitem{YD11} P. Yuan and C. Ding, Permutation polynomials over finite fields from a
powerful lemma, Finite Fields Their Appl., 17(2011), 560-574.

\bibitem{YD14} P. Yuan, C. Ding, Further results on permutation polynomials over finite fields. Finite Fields Appl. 27 (2014), 88-103.

\bibitem{ZYLHZ19} D. Zheng, M. Yuan, N. Li, L. Hu, and X. Zeng, Constructions of
involutions over finite fields, IEEE Trans. Inf. Theory, 65(2019), 7876-7883.

\bibitem{ZY18} Y. Zheng and Y. Yu, On inverse of permutation polynomials of
small degree over finite fields, II, 2018, arXiv:1812.11812. [Online].
Available: http://arxiv.org/abs/1812.11812

\bibitem{ZWW20} Y. Zheng, Q. Wang, and W. Wei, On inverses of permutation polynomials of small degree over finite fields, IEEE Trans. Inf. Theory,  66(2020), 914-922,.


\bibitem{Z09} M. Zieve, On some permutation polynomials over $\mathbb{F}_q$ of the form
$x^rh(x^{(q-1)/d})$, Proc. Amer. Math. Soc., 137(2009), 2209-2216.














\end{thebibliography}
\end{document}